\documentclass[11pt]{article}

\usepackage[
  font=palatino,
  geometry={letterpaper,margin=1in},
  paragraphs=skip,
  spacing=texdefault,
  links=true,
  linkcolors=false,
  theorems=true,
  share counter=true,
  number in=section,
  algorithms=true,
  algorithmsoptions={algo2e,ruled},
  graphics=true,
  graphicspath={figures/},
  probability=alias,
  biblio=biblatex,
  bibliostyle=alphabetic,
  bibliosources={refs.bib},
  biblatexbackend=biber
]{pomegranate}
\usepackage{graphicx}
\usepackage{nicefrac}
\usepackage{subcaption}

\newif\ifshowcomments
\showcommentsfalse

\usepackage{xcolor}

\DeclareOperator{\sign}
\DeclareOperator{\sgn}
\DeclareOperator{\Top}
\DeclareOperator{\dist}
\DeclareOperator{\err}
\DeclareOperator{\diam}
\DeclareOperator{\Unif}
\DeclareOperator{\Rad}
\DeclareOperator{\Bern}
\DeclareOperator{\Var}
\DeclareOperator{\Cov}
\DeclareOperator{\polylog}

\DeclareColor{\Red}{192, 0, 0} \DeclareOperator{\sign} \DeclareOperator{\Unif}

\newcommand{\tO}{\widetilde O}

\newcommand{\del}{\delta}

\newcommand{\cF}{\mathcal F}

\newcommand{\cN}{\mathcal N}
\newcommand{\cS}{\mathcal S}
\newcommand{\cT}{\mathcal T}

\newcommand{\xstar}{x^\star}

\newcommand{\xt}{x^{(t)}}
\newcommand{\xtp}{x^{(t+1)}}
\newcommand{\ut}{u^{(t)}}
\newcommand{\utp}{u^{(t+1)}}
\newcommand{\rt}{r^{(t)}}
\newcommand{\rtp}{r^{(t+1)}}

\newcommand{\BIHT}{\textsc{BIHT}}
\newcommand{\NBIHT}{\textsc{NBIHT}}
\newcommand{\RAIC}{\textsc{RAIC}}

\newcommand{\hA}{h_A}
\newcommand{\hAJ}{h_{A;J}}

\crefname{algorithm}{Algorithm}{Algorithms}
\Crefname{algorithm}{Algorithm}{Algorithms}
\crefname{algocf}{Algorithm}{Algorithms}
\Crefname{algocf}{Algorithm}{Algorithms}

\title{On the Role of Normalization in Binary Iterative Hard Thresholding for 1-bit Compressed Sensing}
\author[1]{Arya Mazumdar}
\author[1]{Prateeti Mukherjee}
\affil[1]{UC San Diego, \url{{arya, pmukherjee}@ucsd.edu}}
\date{}

\begin{document}
\maketitle
\begin{abstract}
Binary Iterative Hard Thresholding (BIHT) is a simple, yet effective, greedy method for recovering a sparse  vector from one-bit sign measurements.  In its original form, BIHT performs a ``gradient-descent'' step, followed by hard thresholding. 

A convergence analysis of this  algorithm was left open in the introductory work of \textcite{Jacques2011Robust1C} and has remained unresolved for over a decade, with subsequent sharp analyses studying a {\em normalized variant} instead, that additionally projects every iterate onto the  unit sphere. This paper resolves that gap and characterizes when per-iteration normalization is algorithmically necessary.

In the noiseless setting, we prove a universal, sample-optimal convergence theorem for the original BIHT algorithm. Specifically, with $\widetilde O\parens *{s/\epsilon}$ measurements, a deterministic finite-time iterate has directional error at most $\epsilon$, simultaneously for every $s$-sparse unit vector. This matches the optimal sample dependence achieved by normalized BIHT in prior work. Thus, in the noiseless regime, per-iterate normalization is unnecessary for optimal recovery.

Under sign corruptions, we prove a sharp separation. If at most a $\tau$ fraction of signs are flipped adversarially, then \BIHT{}, without per-iterate normalization, still reaches the robust error floor at an early iterate with a matching $\widetilde O\parens *{s/\epsilon}$ sample complexity rate as its normalized variant. This recovery, however, is not stable. We prove a scalar lower bound showing that any nontrivial corruption pattern, even one that involves only one flipped sign together with one clean sign, forces the iterates to oscillate indefinitely. Consequently, no general last-iterate convergence theorem can hold for \BIHT{} under sign corruptions, while its normalized surrogate provably escapes this instance.

\end{abstract}
%\input{abstract}

%\tableofcontents
\section{Introduction}
%Under this strictly richer observation model, their guarantee for bounded, noiseless sparse signals implies recovery of the full signal, including its norm, from $m=\widetilde O\parens*{s\log^2\parens*{1/\epsilon}}$ measurements.

In one-bit compressed sensing, a high-dimensional signal is observed only through the signs of linear measurements. Given a sensing matrix $A \in \mathbb{R}^{m \times n}$ and an unknown $s$-sparse signal $x^\star\in\Sigma_s^n$, where $\Sigma_s^n \triangleq\set{x\in\R^n \given \norm{x}_0 \le s}$, the vector of $m$ observations is
\begin{align*}
    y = \sign(A x^\star) \in \{\pm 1\}^m,
\end{align*}
one bit per measurement. This model, first introduced in the work of \textcite{Boufounos20081BitCS}, sits at the extreme end of quantization, discarding all magnitude information. This loss of scale is a consequence of the homogeneous, zero-threshold sign map, not the one-bit alphabet itself. In the one-bit $\Sigma\Delta$ model of \textcite{saab_wang_yilmaz_2018}, each output bit also depends on an internal state carried across measurements, so the resulting bitstream retains enough scale for full recovery. Under the memoryless sign model considered here, positive rescaling of $x^\star$ leaves every observation unchanged. As a result, the only identifiable component is the direction, and we normalize the target to the sparse unit sphere without loss of generality. The objective is to produce, from as few measurements as possible, a directional estimate $\widehat{x}$ that is $\epsilon$-close to $\xstar$ in $\ell_2$ distance. 

If the sensing matrix has standard Gaussian rows, then $m=\widetilde\Theta\parens*{s/\epsilon}$ measurements are necessary and sufficient \cite{Jacques2011Robust1C}, while the first computationally tractable convex program achieved recovery with $m=\widetilde O\parens*{s/\epsilon^5}$ measurements \cite{plan2013one}. Under adversarial sign corruptions, the uniform guarantee for a robust convex program deteriorated further to $m=\widetilde O\parens*{s/\epsilon^{12}}$ measurements \cite{plan_vershynin_robust_2013}. The optimal $m=\widetilde O\parens*{s/\epsilon}$ sample dependence was subsequently recovered through elaborate measurement matrices and specialized multi-stage sensing-and-decoding schemes \cite{gopi_onebit_2013}, or via combinatorial constructions \cite{flodin_2019}, both considerably more involved than a draw of i.i.d. Gaussian entries. Moreover, \textcite{flodin_2019} guarantee recovery only for a restricted class of signals, while \textcite{gopi_onebit_2013} require expensive, multi-stage decoders, with running times scaling as $\epsilon^{-5}$.

\textcite{Jacques2011Robust1C} proposed {\em Binary Iterative Hard Thresholding (BIHT)} (\cref{alg:biht}) as a  simple first-order method for this problem, analogous to the iterative hard thresholding algorithm for compressed sensing~\cite{blumensath2009iterative}. Starting from a sparse unit vector, BIHT iterates a sign-mismatch correction, interpretable as a subgradient descent step on a one-sided sign-consistency objective, followed by hard thresholding onto the set of $s$-sparse vectors, with a single normalization applied to the final output. The algorithm is simple to implement, requires no tuning beyond the sparsity level, and has remained the standard first-order baseline for one-bit recovery. A convergence analysis of BIHT was left open in \cite{Jacques2011Robust1C}, and has remained unresolved for over a decade.

\begin{Algorithm}[H]
\caption{Binary Iterative Hard Thresholding (\cite{Jacques2011Robust1C})}
\label{alg:biht}

\small

\KwIn{$A \in \R^{m \times n}$, $y \in \set{-1,+1}^m$, sparsity $s$, horizon $T$, $x^{(0)} \in \Sigma_s^n \cap \cS^{n-1}$}
\KwOut{$x_{\mathrm{BIHT}}$}

%For $v\in\R^n$, let $\cT_s(v)$ denote the set of vectors obtained by retaining any $s$ largest coordinates of $v$ and zeroing out the rest\;
Let $\cT_s(v)$ denote all $s$-term hard-thresholdings of $v$, with arbitrary tie-breaking

Set $\eta = \sqrt{2\pi}$\;
%Fix arbitrary $x^{(0)} \in \Sigma_s^n \cap \cS^{n-1}$\;

\For{$t = 1,2,\ldots,T$}{
    %$\widetilde{x}^{(t)}
    $v^{(t)} 
    \leftarrow
    x^{(t-1)}
    +
    \frac{\eta}{2m}A^\top
    \parens*{
            y-\sign\parens*{Ax^{(t-1)}}
    }$\;

    Choose $x^{(t)}
    \in
    %\mathcal{T}_s\parens{\widetilde{x}^{(t)}}$\;
    \mathcal{T}_s\parens{v^{(t)}}$\;

    \Red{$x^{(t)}
    \leftarrow
    x^{(t)}/\norm{x^{(t)}}_2$} \quad \Red{(per-iterate normalization: not executed in BIHT)}\; 
}

\KwRet{$x^{(T)}/\norm{x^{(T)}}_2$}
\end{Algorithm}

%Progress on this original dynamics have been made over the years. 
Early progress toward a convergence theory for \BIHT{} focused largely on one-step analyses. \textcite{jacques2013quantized} showed that the \emph{first} iterate of \BIHT{} achieves error $\epsilon$ from $m = \tO\parens*{s/\epsilon^2}$ measurements; similar one-step guarantees appear in \textcite{plan_verhsynin_yudovina}. Beyond the first iterate, however, these analyses do not track the \BIHT{} trajectory, and their dependence on $\epsilon$ is quadratically suboptimal. \textcite{liu_li_shen} then studied the full trajectory, interpreting BIHT as a projected subgradient method under sparsity constraints, and proved that the subsequent iterates remain bounded, retaining the accuracy of the first step. This yields stability of the trajectory, but no reduction of the error beyond the first iteration. A convergence theorem for the algorithm of \cite{Jacques2011Robust1C}, one in which the iterates provably \emph{improve}, remained out of reach.

Sharp guarantees were finally obtained by changing the algorithm. \textcite{friedlander_nbiht} and \textcite{Matsumoto_noiseless_2024, Matsumoto_robust_2023, pmlr-v291-matsumoto25a} analyze a \emph{normalized} variant of \BIHT{} that additionally projects every iterate onto the  unit sphere. Normalizing the iterates allows one to translate a so-called {\em restricted approximate invertibility condition (RAIC)} of Gaussian random matrices to show contraction of the estimation error at every step. For the normalized variant of \BIHT{} with Gaussian sensing matrices, the theory is, by now, essentially complete: \cite{Matsumoto_noiseless_2024} proves universal recovery with the optimal $\tO\parens *{s/\epsilon}$ measurements, and \cite{Matsumoto_robust_2023} extends the guarantee to adversarial sign corruptions with a matching robust error floor. In each of these analyses, per-iterate normalization is crucial: since RAIC holds uniformly for any pair of sparse unit vectors, contraction is guaranteed whenever the iterates are on the unit sphere. The original \BIHT{} algorithm offers no such control over the iterates, and its analysis has remained open. 

This paper resolves that open question, and in doing so, characterizes exactly when per-iteration normalization is algorithmically necessary for one-bit recovery: never in the noiseless regime, and in a precise, last-iterate sense, always under corruption.

\subsection{Our Results}
\label{sec:intro-results}
 
\paragraph{Normalization is unnecessary without noise.} Our first result is a universal, sample-optimal convergence theorem for the original, normalization-free BIHT algorithm under noiseless measurements.
 
\begin{theorem}[Informal version of \cref{thm:noiseless}] \label{thm:intro-noiseless}
Let $0 < \epsilon \le 1/64$ and let $A\in \R^{m\times n}$ have independent standard Gaussian entries. If $m = \widetilde{O}(s/\epsilon)$, then with high probability over the draw of $A$, the following holds simultaneously for every $x^\star \in \Sigma^n_s \cap \cS^{n-1}$, every sparse unit initialization, and every sequence of hard-thresholding tie-breaks: BIHT, without per-iterate normalization, run for $T = O\parens{\log\parens{1/\epsilon}}$ iterations on $y = \sign\parens{Ax^\star}$, returns an estimate $\widehat{x}_{\mathrm{BIHT}}$ with $\norm{\widehat{x}_{\mathrm{BIHT}} - x^\star}_2 \le \epsilon$.
\end{theorem}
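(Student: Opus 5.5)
The plan is to reduce the unnormalized dynamics to the normalized ones, using scale invariance of the sign map. Let $h_A(x)\triangleq \frac{\eta}{2m}A^\top(\sign(Ax^\star)-\sign(Ax))$. This map depends on $x$ only through $\sign(Ax)$, so $h_A(cx)=h_A(x)$ for all $c>0$. Write $x^{(t-1)}=r_{t-1}u_{t-1}$ with $u_{t-1}$ a unit $s$-sparse vector and $r_{t-1}=\norm{x^{(t-1)}}_2$. The update then becomes
\begin{align*}
v^{(t)} = r_{t-1}u_{t-1} + h_A(u_{t-1}).
\end{align*}
Hard thresholding is positively homogeneous only on the term that scales. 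I will therefore track the direction $u_t$ and the radius $r_t$ jointly, and compare $\norm{u_t-x^\star}_2$ against the effective step $\eta/r_{t-1}$.

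The key analytic input is the RAIC used in prior work on normalized BIHT \cite{Matsumoto_noiseless_2024}. I assume it holds uniformly with high probability when $m=\widetilde O(s/\epsilon)$. In the form I will use, for all unit $s$-sparse $x^\star,u$ and every support set $J$ of size $O(s)$,
\begin{align*}
\norm{(x^\star-u) - h_{A;J}(u)}_2 \le c\sqrt{\epsilon\,\norm{x^\star-u}_2}+c\,\epsilon .
\end{align*}
I intend to generalize this to a \emph{scaled} RAIC. Fix $u$ and $J$, and set $w=ru+h_{A;J}(u)$. The RAIC then gives $w=(r-1)u+x^\star+e$ with $\norm{e}_2$ bounded as above. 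Choose $J$ to contain the supports of $u$, $x^\star$, and the thresholded output. The thresholding step costs at most a factor of $2$ (the standard $\norm{\cT_s(w)-z}\le 2\norm{w-z}$ for $s$-sparse $z$), applied with the comparison point $z=ru+(x^\star-u)$ restricted to $J$.

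The main step, and the main obstacle, is controlling the radius. I expect $r_t$ to converge to a fixed point near $1$. Heuristically, $\norm{ru+x^\star-u}^2=(r-1)^2+2(r-1)\langle u,x^\star\rangle+1$, which is close to $r^2$ when $u\approx x^\star$.

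Concretely, I will prove by induction two invariants: $r_t\in[1/2,2]$ and $d_t\triangleq\norm{u_t-x^\star}_2\le \max(2^{-t}\cdot 2,\ C\epsilon)$. Write $w=r_{t-1}u_{t-1}+(x^\star-u_{t-1})+e$. Then
\begin{align*}
w-r_{t-1}x^\star=(1-r_{t-1})(x^\star-u_{t-1})+e ,
\end{align*}
so after normalization the direction error is roughly $\frac{|1-r_{t-1}|\,d_{t-1}+2\norm{e}}{r_{t-1}}$. This contracts when $|1-r_{t-1}|\le 1/2$. If that bound on $|1-r_{t-1}|$ fails, it must be recovered through the radius dynamics: $r_t\approx r_{t-1}+(1-r_{t-1})\langle x^\star-u,\cdot\rangle$ is pulled toward $1$ with gain about $d^2/2$.

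The hardest part is the first iteration from an arbitrary initialization. There $d_0$ can be as large as $2$, and the radius can overshoot or collapse; for example, if $u_0=-x^\star$ then $w\approx -u_0+x^\star=2x^\star$. I will handle the first one or two steps by direct computation with the RAIC: after one step $u_1$ is already within a constant of $x^\star$ and $r_1\in[1/2,2]$. From then on the induction gives geometric decrease, $d_t\le \tfrac34 d_{t-1}+O(\sqrt{\epsilon d_{t-1}})+O(\epsilon)$, reaching $O(\epsilon)$ after $T=O(\log(1/\epsilon))$ iterations. Rescaling $\epsilon$ by a constant finishes the argument.

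Uniformity over $x^\star$, initialization, and tie-breaks is inherited because the RAIC is uniform over all sparse unit pairs and all supports $J$. That is also why I take $J$ to be the union of supports, rather than committing to a particular tie-break.
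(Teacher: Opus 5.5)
You do arrive at the right decomposition, $w-r_{t-1}x^\star=(1-r_{t-1})(x^\star-u_{t-1})+e$, i.e.\ you compare against the moving reference $r_{t-1}x^\star$, and this is also the core of the paper's proof. (Your first comparison point, $ru+(x^\star-u)$, can be $2s$-sparse, so the two-approximate-projector bound does not apply to it; $r_{t-1}x^\star$ is the valid choice.) The genuine gap is the radius. Contraction of $d_t$ needs $\abs{r_{t-1}-1}$ to be small at \emph{every} step, but you never show that your radius invariant is preserved. Instead you appeal to a restoring force of gain about $d^2/2$ and to $r_t$ converging to a fixed point near $1$. That mechanism cannot carry the induction. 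The restoring term is quadratic in $d_t$, while the RAIC error $e$, which also has a radial component, is of order $\sqrt{\epsilon d_t}+\epsilon$, so the pull is swamped once $d_t$ is small. Moreover, in the noiseless case the iterates can simply freeze at whatever radius they have reached, since the correction is zero once $\sign(Ax^{(t)})=y$; nothing drives $r_t$ to $1$. Your window is also too wide. The factor $2$ from hard thresholding multiplies all of $\norm{w-r_{t-1}x^\star}_2$, not just $e$, and normalizing around $r_{t-1}x^\star$ costs a further $O(1/r_{t-1})$. With the paper's two deterministic lemmas the coefficient of $d_{t-1}$ is $8\abs{r_{t-1}-1}/r_{t-1}$, which is below $1/2$ only for $r_{t-1}$ quite close to $1$. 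Even your own heuristic gives coefficient $1$ at $r_{t-1}=1/2$, so the factor $\tfrac34$ on $[1/2,2]$ is not justified.

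The missing idea is that no restoring force is needed, only a bound on the accumulated drift over the finite horizon. The radius moves by at most the same one-step error:
$\abs{r_t-r_{t-1}}=\abs{\norm{x^{(t)}}_2-\norm{r_{t-1}x^\star}_2}\le\norm{x^{(t)}-r_{t-1}x^\star}_2\le 2(\abs{r_{t-1}-1}d_{t-1}+\norm{e}_2)$.
With RAIC invoked at scale $\delta=\epsilon/L$ for a large absolute constant $L$, and with $\abs{r_{t-1}-1}\le 1/64$, this is at most a small constant times $\max\set{d_{t-1},\epsilon}$. Take the envelope $d_t\le\max\set{2^{-(t-1)}/64,\epsilon}$ and the horizon $T=1+\lceil\log_2(1/(64\epsilon))\rceil$. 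Then $\sum_{j=1}^{T-1}\max\set{2^{-(j-1)}/64,\epsilon}\le 1/32+(T-1)\epsilon\le 1/16$. Hence $\abs{r_t-1}$ stays below $1/64$ throughout, which is exactly the window where $d_t\le\frac12\max\set{d_{t-1},\epsilon}$ holds. This closes a joint induction on $(d_t,\abs{r_t-1})$ without any claim about the limit of $r_t$.

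Likewise, the first step is no obstacle. Since $x^{(0)}$ is a unit vector, $r_0=1$ and $w=u_0+(x^\star-u_0)+e=x^\star+e$ for every $u_0$; for $u_0=-x^\star$ one gets $w\approx x^\star$, not $2x^\star$. Hence $\norm{x^{(1)}-x^\star}_2\le 2\norm{e}_2\le 1/128$, which gives $\abs{r_1-1}\le 1/128$ and $d_1\le 1/64$ at once.
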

The sample complexity matches the optimal dependence achieved by normalized BIHT \cite{Matsumoto_noiseless_2024}, and the guarantee is of the same universal form: a single draw of the sensing matrix serves all sparse targets simultaneously. Moreover, the guarantee is a last-iterate statement at a deterministic, $\epsilon$-computable horizon. Therefore, in the noiseless regime, \cref{alg:biht}, as proposed in \cite{Jacques2011Robust1C}, is sample-optimal, and per-iterate normalization is unnecessary.

\paragraph{Under corruption, the optimum cannot be held\ldots} Our second result concerns  sign corruptions, where the observed labels $y$ may differ from $\sign(Ax^\star)$ in up to a $\tau$ fraction of coordinates. We find that a single corrupted measurement already negates the feasibility of last-iterate convergence for the fixed-step dynamics, even in dimension one. 
 
\begin{theorem}[Informal version of \cref{thm:adversarial-lower-bound}]
\label{thm:intro-lower}
Let $n = s = 1$ and let $A$ have independent standard Gaussian entries. For every corruption pattern that flips at least one sign and leaves at least one sign clean, and every fixed nonzero initialization independent of $A$, the normalization-free fixed-step iterates almost surely change sign infinitely often. Consequently,
\begin{equation*}
    \limsup_{t \to \infty} \norm*{ \frac{x^{(t)}}{\abs{x^{(t)}}} - x^\star }_2 = 2.
\end{equation*}
\end{theorem}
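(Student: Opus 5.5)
In dimension one the dynamics is an explicit scalar recursion, and I would simply analyze it directly. Write $a\in\R^m$ for the single column of $A$, take $x^\star=1$ without loss of generality (the case $x^\star=-1$ follows by symmetry), and let $F\subseteq[m]$ be the nonempty set of flipped indices, with $C=[m]\setminus F$ nonempty. For $n=s=1$ hard thresholding is the identity, so the update reads
\begin{equation*}
x^{(t)} = x^{(t-1)} + \frac{\eta}{2m}\sum_{i} a_i\parens*{y_i - \sign(a_i x^{(t-1)})}.
\end{equation*}
The mismatch term depends only on $\sign(x^{(t-1)})$. If $x^{(t-1)}>0$, then $\sign(a_ix^{(t-1)})=\sign(a_i)$. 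Clean indices contribute $0$, and each flipped index contributes $a_i(-2\sign a_i)=-2\abs{a_i}$. If $x^{(t-1)}<0$, then $\sign(a_ix^{(t-1)})=-\sign(a_i)$. Now clean indices contribute $2\abs{a_i}$ and flipped indices contribute $0$. So, almost surely (all $a_i\neq 0$), the iterates follow a piecewise-constant drift:
\begin{equation*}
x^{(t)} = x^{(t-1)} - \delta_F \ \text{ if } x^{(t-1)}>0,\qquad x^{(t)} = x^{(t-1)} + \delta_C \ \text{ if } x^{(t-1)}<0,
\end{equation*}
where $\delta_F=\frac{\eta}{m}\sum_{i\in F}\abs{a_i}>0$ and $\delta_C=\frac{\eta}{m}\sum_{i\in C}\abs{a_i}>0$. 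Both are strictly positive precisely because $F$ and $C$ are nonempty.

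The key steps then follow in order. First, from any positive iterate the sequence decreases by the fixed amount $\delta_F$ until it becomes nonpositive, which takes at most $\lceil x/\delta_F\rceil+1$ steps. Likewise, from any negative iterate it increases by $\delta_C$ until it becomes nonnegative. So as long as the iterate never hits exactly $0$, every sign is abandoned in finitely many steps, and the iterate changes sign infinitely often.

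The remaining issue is the degenerate case $x^{(t)}=0$, where $\sign(0)$ makes the step ill-defined or stalls it. This is the one point requiring care, and it is where the ``almost surely'' and the independence of the initialization from $A$ enter. Every iterate has the form $x^{(0)} - k\delta_F + \ell\delta_C$ for nonnegative integers $k,\ell$, not both zero. Fix $x^{(0)}\neq 0$ and a pair $(k,\ell)$. The event $x^{(0)} = k\delta_F - \ell\delta_C$ is a nontrivial linear condition on $(\abs{a_i})_i$, because $\delta_F$ and $\delta_C$ involve disjoint coordinates. Since $(\abs{a_i})$ has a density, this event has probability zero. A countable union over $(k,\ell)$ is still null, so almost surely no iterate is ever $0$ and the oscillation argument applies. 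The same countable-union argument also covers $a_i=0$.

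Finally, the sign changes give the limsup. Whenever $x^{(t)}<0$ we have $x^{(t)}/\abs{x^{(t)}}=-1$, so $\norm{x^{(t)}/\abs{x^{(t)}} - x^\star}_2 = 2$. This happens infinitely often, and the quantity is always at most $2$, so the limsup equals $2$.
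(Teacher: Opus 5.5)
Your proposal is correct and follows essentially the same route as the paper: you reduce to the scalar piecewise-constant recursion with pushes $-\alpha$ (your $\delta_F$) and $+\gamma$ (your $\delta_C$), deduce infinitely many sign changes, and rule out hitting zero because each candidate zero event $x^{(0)}=k\delta_F-\ell\delta_C$ is a nontrivial affine condition on non-atomic random variables. The only difference is cosmetic. You take a countable union over all pairs $(k,\ell)$, which implicitly uses that the first zero iterate is preceded by nonzero ones so that the recursion applies up to it. The paper instead fixes the first hitting time $t$ and takes a finite union over pairs with $k+\ell=t$.
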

That is, the directional error returns to its maximum possible value infinitely often, even when a single corrupted bit lies within otherwise clean measurements. Since the guarantee being ruled out is uniform over dimensions, sparsity levels, targets, and corruption patterns, this single scalar instance suffices to demonstrate impossibility. On the same instance, if there are more clean bits than flipped ones, normalized BIHT recovers the correct sign from the first step onward, except with exponentially small probability (\cref{rem:normalization-escapes-lb-instance}).
%The mechanism is simple: a flipped set of bits creates two constant pushes, one applied whenever the iterate is positive, and one whenever it is negative. Without a per-iterate reset to unit magnitude, these pushes accumulate and repeatedly carry the iterate across the origin. 
\paragraph{\ldots but it is reached, and one can certify when.} In light of \cref{thm:adversarial-lower-bound}, a hitting-time guarantee is the strongest form of convergence available to the fixed-step dynamics. Our third result shows that BIHT, without per-iterate normalization, achieves it at the optimal accuracy scale. Our results are stated for the robust error floor
\begin{equation*}
    \gamma(\epsilon, \tau) \triangleq \epsilon + c_1\sqrt{\epsilon\tau} + c_2\, \tau\sqrt{\log(2e/\tau)},
\end{equation*}
which is the same floor attained by normalized BIHT in this model \cite{Matsumoto_robust_2023}.
 
\begin{theorem}[Informal version of \cref{thm:adversarial-positive}] \label{thm:intro-hitting}
Under the same measurement budget as \cref{thm:intro-noiseless},  the following holds simultaneously with high probability for every sparse unit target, every admissible corruption pattern, and  every sparse unit initialization: 
%, and every sequence of hard thresholding choices
 if $\gamma\parens{\epsilon,\tau} \le 1/64$, then some iterate $t_{\mathrm{hit}} \le O\parens{\log\parens{1/\gamma\parens{\epsilon,\tau}}}$ of normalization-free BIHT satisfies
\begin{align*}
\norm*{\frac{x^{t_{\mathrm{hit}}}}{\norm{x^{t_{\mathrm{hit}}}}_2} - \xstar}_2 \le \gamma\parens{\epsilon, \tau},
\end{align*}
and every iterate up to $t_{\mathrm{hit}}$ is nonzero with norm within $1/64$ of one.
\end{theorem}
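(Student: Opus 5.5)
We prove \cref{thm:intro-hitting} by running the noiseless contraction argument behind \cref{thm:intro-noiseless} on the unnormalized trajectory, and stopping it as soon as the contraction can no longer be certified. Write $x^{(t)} = r_t u_t$ with $r_t = \norm{x^{(t)}}_2$ and $u_t \in \cS^{n-1}$. Since $\sign(Ax^{(t)}) = \sign(Au_t)$, the update is
\begin{equation*}
v^{(t+1)} = r_t u_t + \frac{\eta}{2m}A^\top\parens*{y - \sign(Au_t)}.
\end{equation*}
Let $\widetilde y = \sign(A\xstar)$ and let $F$ be the set of flipped coordinates, with $\abs{F}\le \tau m$. We split
\begin{equation*}
y - \sign(Au_t) = \parens*{\widetilde y - \sign(Au_t)} + (y-\widetilde y),
\end{equation*}
where $y - \widetilde y$ is supported on $F$.

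For the clean part we use the RAIC of Gaussian matrices for pairs of sparse unit vectors. With $m=\tO(s/\epsilon)$, uniformly over sparse unit $u,\xstar$ and support sets $J$ of size $O(s)$,
\begin{equation*}
\norm*{(u - \xstar) - \frac{\eta}{2m}\bracks*{A^\top\parens*{\sign(Au)-\sign(A\xstar)}}_J}_2 \le \delta\norm{u-\xstar}_2 + \epsilon,
\end{equation*}
where $\delta$ is a small constant times $\sqrt{\norm{u-\xstar}_2}$ or an absolute constant. For the corruption part we need a uniform bound, over sparse $J$ and $\abs{F}\le\tau m$,
\begin{equation*}
\frac{\eta}{2m}\norm*{\bracks*{A_F^\top (y-\widetilde y)}_J}_2 \lesssim \tau\sqrt{\log(2e/\tau)} + \sqrt{\epsilon\tau}.
\end{equation*}
The first term comes from the norm of a sum of $\tau m$ Gaussian rows, union bounded over supports and subsets $F$. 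The $\sqrt{\epsilon\tau}$ term comes from the interaction of $F$ with the mismatch set $\set{i : \sign(\langle a_i,u\rangle)\ne\sign(\langle a_i,\xstar\rangle)}$, whose size is about $m\norm{u-\xstar}_2$. I would take both bounds from the robust RAIC analysis of normalized BIHT \cite{Matsumoto_robust_2023}, as the paper allows.

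Combining the two bounds, and using that hard thresholding at most doubles the error on $J = \mathrm{supp}(u_t)\cup\mathrm{supp}(x^{(t+1)})\cup\mathrm{supp}(\xstar)$, gives the one-step inequality
\begin{equation*}
\norm{x^{(t+1)} - \xstar}_2 \le \abs{r_t-1} + \tfrac12\norm{u_t-\xstar}_2 + O(\gamma).
\end{equation*}
Here $x^{(t+1)} - \xstar = (r_t - 1)u_t + [\text{RAIC residual}]$, so the radial defect carries over additively. The new radial defect satisfies
\begin{equation*}
\abs{r_{t+1}-1} \le \norm{x^{(t+1)}-\xstar}_2,
\end{equation*}
and the directional error is controlled by $\norm{u_{t+1}-\xstar}_2 \le 2\norm{x^{(t+1)}-\xstar}_2/\abs{r_{t+1}}$. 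This is the same coupled recursion as in the noiseless proof, which I take as given from the proof of \cref{thm:noiseless}. There, the key step is that the radial defect never compounds: it is itself bounded by the previous total error, so $e_t \triangleq \norm{x^{(t)}-\xstar}_2$ obeys a recursion of the form
\begin{equation*}
e_{t+1} \le c\, e_t + C\gamma, \qquad c<1.
\end{equation*}
To get $c<1$ rather than $c = 1 + \tfrac12$, I would sharpen the RAIC step using the projection structure. The component of the RAIC step along $u_t$ nearly cancels the radial excess, because $\langle u_t, \tfrac{\eta}{2m}A^\top(\widetilde y - \sign(Au_t))\rangle \approx 1 - \langle u_t,\xstar\rangle$ in the Gaussian mean. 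I expect this to be the main obstacle, and I would rely on the noiseless analysis having already established it.

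To define the hitting time, iterate $e_{t+1}\le c\,e_t + C\gamma$ only while $e_t$ exceeds a constant multiple of $\gamma$. This is the regime where the $O(\gamma)$ terms are absorbed and the error still contracts geometrically, by at least a constant factor per step. Starting from $e_0\le 2$, the error therefore falls below $\gamma$ within $O(\log(1/\gamma))$ steps; let $t_{\mathrm{hit}}$ be the first such step. We do not claim monotonicity afterwards, which is consistent with \cref{thm:intro-lower}. The bound $\abs{r_t - 1}\le 1/64$ for every $t\le t_{\mathrm{hit}}$ requires an invariant: once the error is below a small constant it stays there up to $t_{\mathrm{hit}}$, because the recursion is contracting while $e_t \ge \gamma$. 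The first one or two steps, starting from an arbitrary $u_0$, need a separate crude argument, as in the noiseless case, and use $\gamma\le 1/64$. The directional bound at $t_{\mathrm{hit}}$ then follows from $\norm{u-\xstar}_2\le 2e/(1-e)$, after rescaling constants $c_1,c_2$. Uniformity over targets, corruption patterns, and tie-breaks holds because every concentration bound used is uniform.
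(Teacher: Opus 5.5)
There is a genuine gap, at exactly the step you flag as the main obstacle, and it cannot be closed the way you propose. You measure progress by $\norm{x^{(t)}-\xstar}_2$, i.e.\ against the fixed point $\xstar$. But the sign correction is invariant under positive rescaling of its second argument, so it carries no information about $r_t$ and cannot cancel the radial defect $r_t-1$.

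Concretely, take $\tau=0$ and a state $x^{(t)}=r\xstar$ with $r\neq 1$. Then $h_A(\xstar,u_t)=0$, so $x^{(t+1)}=r\xstar$ and $\norm{x^{(t+1)}-\xstar}_2=\norm{x^{(t)}-\xstar}_2=\abs{r-1}$. This violates any recursion $e_{t+1}\le c\,e_t+C\gamma$ with $c<1$ as soon as $\abs{r-1}>C\gamma/(1-c)$.

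The cancellation you invoke is also not there. The component of the population correction $\xstar-u_t$ along $u_t$ is $\langle u_t,\xstar\rangle-1=-\frac12\norm{u_t-\xstar}_2^2$, which depends on the directional error alone. The first step only guarantees $\abs{r_1-1}\le 1/128$, not $\le\gamma$, and nothing in the dynamics pushes $r_t$ back toward one. So your hitting time, defined through $\norm{x^{(t)}-\xstar}_2\le\gamma$, is not guaranteed to exist. The noiseless argument you intend to borrow contains no such step either. It works precisely because it never tries to contract $\norm{x^{(t)}-\xstar}_2$ or to show that the norms converge.

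The missing idea is to compare against the moving reference $r_t\xstar$ and to track only the directional error $e_t=\norm{u_t-\xstar}_2$. Let $\xi_t$ be the adversarial \RAIC{} residual for the pair $(\xstar,u_t)$ and the support set selected by thresholding. Then
\[ x^{(t)}+h_{f;A;J_{t+1}}(\xstar,u_t)=r_t\xstar+(r_t-1)(u_t-\xstar)+\xi_t . \]
Since $r_t\xstar\in\Sigma_s^n$ is a legitimate comparison point, the two-approximate-projection lemma gives $\norm{x^{(t+1)}-r_t\xstar}_2\le 2(\abs{r_t-1}e_t+\norm{\xi_t}_2)$. The radial defect now appears only multiplied by $e_t$.

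The normalization-around-a-radial-target lemma then gives $e_{t+1}\le (8/r_t)(\abs{r_t-1}e_t+\norm{\xi_t}_2)\le e_t/2$ whenever $e_t>\gamma(\epsilon,\tau)$ and $\abs{r_t-1}\le 1/64$. In that regime each \RAIC{} term is at most a small constant times $e_t$.

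The radius is controlled not by driving it to one but by bounding its drift. You have $\abs{r_{t+1}-r_t}\le\norm{x^{(t+1)}-r_t\xstar}_2=O(e_t)$, and summing this geometric series from $\abs{r_1-1}\le 1/128$ keeps $\abs{r_t-1}\le 1/64$ up to $t_{\mathrm{hit}}$. The hitting time itself must be defined through $e_t$, not through $\norm{x^{(t)}-\xstar}_2$.
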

The trajectory is generated without knowledge of $\tau$, but the certificate identifying the relevant accuracy scale is $\tau$-dependent. \cref{thm:adversarial-positive} is therefore a hitting-time guarantee, and, by \cref{thm:adversarial-lower-bound}, necessarily so. When an a-priori corruption budget $\tau \le \tau_0$ is available, the guarantee upgrades: running normalization-free BIHT for a fixed horizon $T_0 = O(\log(1/\gamma\parens{\epsilon,\tau_0}))$ returns an estimate within the worst-case floor $\gamma\parens{\epsilon, \tau_0}$ (\cref{cor:adversarial-deterministic-stopping}), and the accuracy persists for a window of $\Omega(1/\gamma(\epsilon,\tau_0))$ subsequent iterations, so the stopping time is stable to timing errors (\cref{cor:adversarial-stable-window}). Together, Theorems~\ref{thm:adversarial-lower-bound} and \ref{thm:adversarial-positive} explain an interesting empirical phenomenon that appears in \cite{pmlr-v291-matsumoto25a} (Fig 1.). They observed a related instability of normalization-free \BIHT{} for noisy binary generalized linear models, where the error curve first drops to a robust floor and then oscillates around a constant error. Our experiments in \cref{fig:numerics-corruptions} exhibit the corresponding early-recovery and late-instability behavior under random sign corruptions.

\section{Preliminaries}
\subsection{Notation}\label{sec:notation}

For $s\in[n]$, $\Sigma_s^n\triangleq\set{x\in\R^n\given \norm{x}_0\le s}$ denotes the set of $s$-sparse vectors in $\R^n$. We write $\cS^{n-1}$ for the Euclidean unit sphere and use $\xstar\in \Sigma_s^n \cap \cS^{n-1}$ for the unknown sparse unit vector. The sign function is defined by $\sign(t)=1$ for $t\ge0$ and $\sign(t)=-1$ otherwise, and is applied coordinate-wise to vectors. For $U\subseteq[n]$, $\cT_U$ denotes coordinate restriction to the set $U$. For $v\in\R^n$, $\cT_s(v)$ denotes the set of Euclidean projections of $v$ onto $\Sigma_s^n$. Equivalently, $\cT_s(v)$ is the set of all hard-thresholded vectors obtained by retaining any $s$ largest coordinates of $v$ in magnitude and zeroing out the rest, with arbitrary tie-breaking. We write $x\in\cT_s(v)$ when $x$ is any such valid hard-thresholded output. Throughout, $\eta=\sqrt{2\pi}$ denotes the fixed BIHT step-size and $A\in R^{m\times n}$ denotes the sensing matrix. We use $\gtrsim$ to denote $\ge$ upto universal constants.

Under noiseless measurements, $y = \sign(Ax^\star)$. We use the following notation for the correction term of \cref{alg:biht}
\begin{align}\label{eq:hA-def}
\hA(p,q) \triangleq \frac{\eta}{m}A^\top\cdot\frac{1}{2} \parens*{\sign(Ap)-\sign(Aq)}, 
\end{align}
for $p,q \in \R^n.$
For a set $J\subseteq[n]$, we use $\hAJ$ to denote the correction term restricted to the supports of the variables it applies to, and the coordinates in $J$. Formally,
\begin{align*}
\hAJ(p,q) \triangleq \cT_{\supp(p)\cup\supp(q)\cup J} \parens*{\hA(p,q)} 
\end{align*}
The role of the restriction is to match the coordinates that can affect a hard-thresholded update. In our proofs, the set $J$ usually often be selected by the next hard-thresholding step, and is therefore data-dependent.

Under measurements corrupted by adversarial noise, the observation vector $y\in\braces *{\pm 1}^m$ satisfies $d_H(y, \sign(Ax^\star))\le \tau m$, where $d_H\parens *{\cdot}$ denotes the Hamming distance. For a fixed matrix $A$ and corruption level $\tau\in[0,1]$, let
\begin{align*}
\cF_A(\tau)\triangleq\set*{f:\R^n\to\set{\pm1}^m\given d_{\mathrm H}(f(z),\sign(Az))\le \tau m\text{ for every }z\in\R^n}. 
\end{align*}
The set $\cF_A(\tau)$ comprises of all possible ways an adversary can corrupt the clean measurements, given they are only allowed to change $\tau$ fraction of the data. We will need to then modify the correction term in \cref{eq:hA-def} to measure the mismatch between the corrupted measurements and the current iterate signs. Formally, for every corruption pattern $f\in\cF_A(\tau)$, define
\begin{align*}
h_{f;A}(p,q)
\triangleq
\frac{\eta}{m}A^\top\cdot\frac{1}{2}
\parens*{f(p)-\sign(Aq)}
\end{align*}
and, for $J\subseteq[n]$,
\begin{align*}
h_{f;A;J}(p,q)
\triangleq
\cT_{\supp(p)\cup\supp(q)\cup J}
\parens*{h_{f;A}(p,q)}.
\end{align*}
As before, the support restriction ensures that the analysis matches the coordinates that can meaningfully affect the hard-thresholded updates, and involves data-dependent coordinates. 
\subsection{Deterministic Tools}\label{sec:deterministic-tools}

We first record two elementary facts about hard thresholding.

\begin{lemma}[Hard thresholding is a two-approximate projector] \label{lem:2-approx-projector}
Let $v\in\R^n$, $w\in\cT_s(v)$, and let $z\in\Sigma_s^n$. Then,
\begin{align*}
\norm{w-z}_2
\le
2\norm{v-z}_2
\end{align*}
\end{lemma}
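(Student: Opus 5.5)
The argument uses only the optimality of hard thresholding as a Euclidean projection onto $\Sigma_s^n$, together with the triangle inequality. By the definition in \cref{sec:notation}, every $w\in\cT_s(v)$ is a Euclidean projection of $v$ onto $\Sigma_s^n$, whatever the tie-breaking. Hence
\begin{align*}
\norm{v-w}_2 \le \norm{v-z'}_2 \quad \text{for every } z'\in\Sigma_s^n .
\end{align*}
Taking $z' = z$ gives $\norm{v-w}_2\le\norm{v-z}_2$.

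The estimate then follows from the triangle inequality:
\begin{align*}
\norm{w-z}_2 \le \norm{w-v}_2 + \norm{v-z}_2 \le 2\norm{v-z}_2 .
\end{align*}

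The only point that needs care is the claim that an arbitrary tie-broken $s$-term thresholding is a genuine minimizer, since the paper takes this equivalence as part of the definition. For completeness it can be justified directly. For any $z'\in\Sigma_s^n$ with support $S$, $\abs{S}\le s$, we have
\begin{align*}
\norm{v-z'}_2^2 \ge \sum_{i\notin S} v_i^2 .
\end{align*}
This lower bound is minimized by taking $S$ to be any set of $s$ coordinates of largest magnitude, and it is attained by setting $z' = \cT_S(v)$, which is exactly an element of $\cT_s(v)$. No probabilistic input is needed, and I do not expect any step to be an obstacle.
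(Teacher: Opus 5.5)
Your proposal is correct and follows essentially the same route as the paper: optimality of $w$ as a projection onto $\Sigma_s^n$ gives $\norm{v-w}_2\le\norm{v-z}_2$, and the triangle inequality finishes. Your extra check that every tie-broken $s$-term thresholding attains the minimum distance is valid; the paper simply takes this as part of the definition of $\cT_s$.
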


\begin{proof}
Since $w$ is a closest $s$-sparse vector to $v$, we have $\norm{v-w}_2\le\norm{v-z}_2$. The claim follows by the triangle inequality.
\end{proof}

\begin{lemma}[Restricting to a support containing the selected output]
\label{lem:support-restriction}
Let $v\in\R^n$, let $z\in\cT_s(v)$, and let $U\subseteq[n]$ satisfy $\supp(z)\subseteq U$. Then $z\in\cT_s(\cT_U(v))$.
\end{lemma}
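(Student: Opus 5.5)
We argue directly from the definition of $\cT_s$ as the set of outputs obtained by keeping some $s$ largest-magnitude coordinates and zeroing the rest. Since $z\in\cT_s(v)$, there is an index set $S$ with $\abs{S}\le s$ (with $\abs{S}=s$ when $n\ge s$) such that $z=\cT_S(v)$ and $\abs{v_i}\ge\abs{v_j}$ for every $i\in S$ and $j\notin S$. Write $u\triangleq\cT_U(v)$. The goal is to show that the same set $S$ is a valid choice of $s$ largest coordinates of $u$, and that restricting $u$ to $S$ reproduces $z$.

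First we settle the relation between $S$ and $U$. We have $\supp(z)\subseteq S$, but $S$ may contain indices $i$ with $v_i=0$ that lie outside $U$. Any such index has $\abs{v_i}=0$, and since $i\in S$ is among the $s$ largest, every $j\notin S$ also satisfies $v_j=0$. So in that degenerate case all coordinates outside $\supp(z)$ vanish in $v$, and hence in $u$. Then $u$ agrees with $z$ on $\supp(z)$ and is zero elsewhere, and any tie-breaking choice of $s$ coordinates of $u$ extending $\supp(z)$ is valid and returns $z$. In the nondegenerate case every $i\in S$ has $v_i\ne0$, so $i\in\supp(z)\subseteq U$, giving $S\subseteq U$.

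Now assume $S\subseteq U$. Then $u_i=v_i$ for $i\in S$. For $j\notin S$, either $j\in U$, in which case $\abs{u_j}=\abs{v_j}\le\abs{v_i}=\abs{u_i}$, or $j\notin U$, in which case $u_j=0\le\abs{u_i}$. So the coordinates in $S$ are $s$ largest in magnitude for $u$, which makes $\cT_S(u)=\cT_S(v)=z$ a valid hard-thresholding of $u$. Hence $z\in\cT_s(u)$.

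The only delicate point is the tie-breaking when $S$ contains zero coordinates of $v$ outside $U$; the case split above handles it. Alternatively, the Euclidean-projection characterization gives the same result with no cases. For any $w\in\Sigma_s^n$ we have $\norm{u-w}_2^2=\norm{v-w}_2^2-\norm{\cT_{U^c}v}_2^2+\norm{\cT_{U^c}w}_2^2-2\langle\cT_{U^c}(v),\cT_{U^c}(w)\rangle$ up to rearrangement. A direct comparison then shows that $\norm{u-z}_2^2=\norm{v-z}_2^2-\norm{\cT_{U^c}v}_2^2$ is minimal among $s$-sparse $w$. I would present the combinatorial argument as the main proof.
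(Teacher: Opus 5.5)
Your main combinatorial argument is correct, but it takes a different route from the paper. The paper uses the projection characterization of $\cT_s$. The support $I=\supp(z)$ maximizes the captured mass $\norm{\cT_{I'}(v)}_2^2$ over all $I'$ with $\abs{I'}\le s$. Passing to $\cT_U(v)$ leaves the mass on $I\subseteq U$ unchanged and can only decrease the mass on any other $I'$, so $I$ is still optimal and $\cT_I(\cT_U(v))=z$. Because it works with $\supp(z)$ itself and allows supports of size at most $s$, zeros and ties never arise, and no cases are needed.

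You instead fix an index set $S$ of exactly $s$ largest coordinates and check the ordering coordinate by coordinate. This is more explicit about what ``retaining $s$ largest coordinates'' means after restriction. The cost is the degenerate case where $S$ contains zero coordinates of $v$, which you handle soundly: any zero in $S$ forces $v$ to vanish off $\supp(z)$, so $\cT_U(v)=z$. One wording fix: as you split the cases, the complement is ``every zero index of $S$ lies in $U$,'' not ``every $i\in S$ has $v_i\neq0$.'' $S\subseteq U$ still follows at once, and your degenerate argument never used $i\notin U$. So it is cleaner to split on whether $S$ contains any zero coordinate of $v$.

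Your alternative sketch is essentially the paper's route, but the displayed identity is wrong. With $u=\cT_U(v)$, the correct identity is $\norm{u-w}_2^2=\norm{v-w}_2^2-\norm{\cT_{U^c}(v)}_2^2+2\langle\cT_{U^c}(v),\cT_{U^c}(w)\rangle$. For example, with $n=1$, $U=\emptyset$, and $v=w=1$, the left side is $1$ while your expression gives $-2$. A clean version of the comparison is
\begin{align*}
\norm{u-w}_2^2\ge\norm{u-\cT_U(w)}_2^2=\norm{v-\cT_U(w)}_2^2-\norm{\cT_{U^c}(v)}_2^2\ge\norm{v-z}_2^2-\norm{\cT_{U^c}(v)}_2^2=\norm{u-z}_2^2,
\end{align*}
which uses $\cT_U(w)\in\Sigma_s^n$ and $\supp(z)\subseteq U$. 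Since you present the combinatorial argument as the main proof, this slip does not affect its correctness.
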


\begin{proof}
Let $I=\supp(z)$. Since $z$ is a valid hard-thresholded output of $v$, the support $I$ captures at least as much squared mass of $v$ as any other support of size at most $s$. Restricting to a set $U$ that contains $I$ leaves the mass on $I$ unchanged and cannot improve any support outside $U$. Thus $I$ remains an optimal hard-thresholding support after restriction.
\end{proof}

The following fact is specific to one-bit measurements.

\begin{lemma}[The sign correction only sees direction]
\label{lem:sign-norm-invariant}
Let $z\neq0$ and set $u=z/\norm{z}_2$. Then, for every $p\in\R^n$, $\hA(p,z)=\hA(p,u)$. Moreover, for every $J\subseteq[n]$, $\hAJ(p,z)=\hAJ(p,u)$. The same statement holds for $h_{f;A}$ and $h_{f;A;J}$ for every $f\in\cF_A(\tau)$.
\end{lemma}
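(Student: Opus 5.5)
The plan is to reduce everything to the coordinatewise identity $\sign(Az)=\sign(Au)$, which follows from positive homogeneity of the sign map. Since $z\neq0$, we have $\norm{z}_2>0$ and $Au = Az/\norm{z}_2$. For each row $a_i$ of $A$, the scalar $\langle a_i,u\rangle$ is a positive multiple of $\langle a_i,z\rangle$. With the convention $\sign(t)=1$ for $t\ge0$ and $-1$ otherwise, multiplying by a positive constant preserves both the event $t\ge 0$ and its complement. So $\sign(\langle a_i,u\rangle)=\sign(\langle a_i,z\rangle)$ for every $i$, including the boundary case $\langle a_i,z\rangle=0$.

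Substituting into \cref{eq:hA-def}, the vectors $\tfrac12\parens{\sign(Ap)-\sign(Az)}$ and $\tfrac12\parens{\sign(Ap)-\sign(Au)}$ coincide. Applying the same linear map $\frac{\eta}{m}A^\top$ to both gives $\hA(p,z)=\hA(p,u)$.

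For the restricted version, note that $\supp(u)=\supp(z)$, since dividing by a nonzero scalar does not change which coordinates vanish. Hence the restriction set $\supp(p)\cup\supp(z)\cup J$ equals $\supp(p)\cup\supp(u)\cup J$. Applying the same operator $\cT_U$ to equal vectors yields $\hAJ(p,z)=\hAJ(p,u)$ for every $J$.

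For the corrupted versions, the first argument $p$ enters only through $f(p)$, which is identical on both sides. The second argument enters only through $\sign(Az)=\sign(Au)$, so the same two steps give $h_{f;A}(p,z)=h_{f;A}(p,u)$ and $h_{f;A;J}(p,z)=h_{f;A;J}(p,u)$.

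The only point needing care is the zero-inner-product boundary case in the sign convention. It is handled by the observation above that positive scaling preserves $t\ge0$; there is no real obstacle.
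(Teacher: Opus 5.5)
Your proposal is correct and follows essentially the same argument as the paper: positive rescaling preserves every coordinate of $\sign(Az)$ (you handle the $\langle a_i,z\rangle=0$ case explicitly via the convention $\sign(0)=1$), and $z$ and $u$ share a support, so the restriction sets coincide. The extension to $h_{f;A}$ and $h_{f;A;J}$ follows because the first argument enters only through $f(p)$ and $\supp(p)$, which are the same on both sides.
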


\begin{proof}%[Proof]
Since $z=\norm{z}_2u$ with $\norm{z}_2>0$, positive rescaling preserves every coordinate of $\sign(Az)$, including zero coordinates. Furthermore, $z$ and $u$ have the same support.
\end{proof}

Finally, we record the following elementary geometric fact: a non-unit vector that is close to a positive multiple of $x\in \cS^{n-1}$ is close to $x$ after normalization.

\begin{lemma}[Normalization around a radial target] \label{lem:normalization-around-radial-target}
Let $x\in \cS^{n-1}$, let $\alpha>0$, and let $z\in\R^n$ satisfy $\norm{z-\alpha x}_2\le \alpha/2$. Then $z\neq0$, and
\begin{align*}
\norm*{\frac{z}{\norm{z}_2}-x}_2
\le
\frac{4}{\alpha}\norm{z-\alpha x}_2
\end{align*}
\end{lemma}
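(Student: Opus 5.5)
The plan is to first establish that $z\neq 0$ and that $\norm{z}_2$ is comparable to $\alpha$, and then to control the normalized error by splitting it with the triangle inequality through the intermediate point $z/\alpha$.

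\textbf{Nonvanishing and norm comparison.} Set $d \triangleq \norm{z-\alpha x}_2 \le \alpha/2$. Since $\norm{\alpha x}_2=\alpha$, the reverse triangle inequality gives
\begin{align*}
\abs{\norm{z}_2-\alpha} \le d \le \alpha/2.
\end{align*}
Hence $\norm{z}_2\ge \alpha/2>0$, and in particular $z\neq0$.

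\textbf{Splitting the normalized error.} I would write
\begin{align*}
\norm*{\frac{z}{\norm{z}_2}-x}_2
\le
\norm*{\frac{z}{\norm{z}_2}-\frac{z}{\alpha}}_2
+
\norm*{\frac{z}{\alpha}-x}_2 .
\end{align*}
The second term equals $d/\alpha$ by definition. The first term equals
\begin{align*}
\norm{z}_2\cdot\abs*{\frac{1}{\norm{z}_2}-\frac{1}{\alpha}}
=\frac{\abs{\alpha-\norm{z}_2}}{\alpha}
\le \frac{d}{\alpha},
\end{align*}
using the norm comparison from the first step.

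\textbf{Conclusion.} Adding the two pieces gives a total bound of $2d/\alpha$, which is at most the claimed $4d/\alpha$. The argument is routine and has no real obstacle. The only point needing care is to normalize by $\alpha$ rather than $\norm{z}_2$ in the intermediate step: dividing by $\norm{z}_2$ would bring in the lower bound $\norm{z}_2 \ge \alpha/2$ and cost a factor of $2$, which is presumably where the stated constant $4$ comes from. Either route gives a constant of at most $4$.
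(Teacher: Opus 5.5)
Your proof is correct. It is essentially the paper's argument: a triangle inequality through an intermediate point, combined with $\abs{\norm{z}_2-\alpha}\le\norm{z-\alpha x}_2$. The difference is the choice of intermediate point.

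The paper compares $z/\norm{z}_2$ first to $\alpha x/\norm{z}_2$, so both pieces carry $\norm{z}_2$ in the denominator. That forces it to use $\norm{z}_2\ge\alpha/2$, which costs the factor $2$ behind the constant $4$, exactly as you guessed.

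Your intermediate point $z/\alpha$ keeps $\alpha$ in the denominator throughout. This gives the sharper bound $2\norm{z-\alpha x}_2/\alpha$. The hypothesis $\norm{z-\alpha x}_2\le\alpha/2$ is then needed only to guarantee $z\neq0$; the estimate itself holds for every nonzero $z$.

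In the paper's induction, your constant would turn the factor $8/r^{(t)}$ in the bound on $e_{t+1}$ into $4/r^{(t)}$. This is harmless slack there, since the paper's constants already close the induction with $4$.
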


\begin{proof}
Write $z=\alpha x+\nu$. The assumption gives $\norm{z}_2\ge\alpha/2$. Comparing $z/\norm{z}_2$ first to $\alpha x/\norm{z}_2$, and then using $\abs{\norm{z}_2-\alpha}\le\norm{\nu}_2$, gives the claim.
\end{proof}

\subsection{Random Matrix Tools}

\begin{definition}[Restricted approximate invertibility condition] \label{def:raic-noiseless}
Let $c_1,c_2,\delta>0$. We say that $A$ satisfies $(s,n,\delta,c_1,c_2)$-\RAIC{} if, for every $p,q\in\Sigma_s^n\cap S^{n-1}$ and every $J\subseteq[n]$ with $\card J\le s$, 
\begin{align*}
\norm{(p-q)-\hAJ(p,q)}_2 \le c_1\sqrt{\delta\norm{p-q}_2} + c_2\delta 
\end{align*} 
\end{definition}

\begin{fact}[Gaussian matrices satisfy \RAIC; \textcite{Matsumoto_noiseless_2024}]
\label{fact:gaussian-matrices-satisfy-raic}
There exist absolute constants $c_1,c_2>0$ such that the following holds. Let $1\le s\le n$, $0<\delta<1$, and $0<\rho<1$. Let $A\in\R^{m\times n}$ have independent $\cN(0,I_n)$ rows. If
\begin{align*}
m
\gtrsim
\frac{s}{\delta}\log\parens*{\frac{en}{s}}
\sqrt{\log\parens*{\frac{2e}{\delta}}}
+
\frac{s}{\delta}\log^{3/2}\parens*{\frac{2e}{\delta}}
+
\frac{1}{\delta}\log\parens*{\frac{1}{\rho}}
\sqrt{\log\parens*{\frac{2e}{\delta}}}
\end{align*}
then $A$ satisfies $(s,n,\delta,c_1,c_2)$-\RAIC{} with probability at least $1-\rho$.
\end{fact}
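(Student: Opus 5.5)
The plan is to establish \cref{fact:gaussian-matrices-satisfy-raic} in three stages: first an exact computation in expectation, then concentration for a fixed pair $(p,q)$ and a fixed support, and finally a net argument that makes the bound uniform. The $c_2\delta$ term in the conclusion is the price of the net discretization.

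\textbf{Step 1 (expectation).} For a standard Gaussian row $a$ and a unit vector $p$, we have $\E[a\,\sign\langle a,p\rangle]=\sqrt{2/\pi}\,p$. Therefore, for $p,q\in\cS^{n-1}$,
\begin{align*}
\E\bracks*{\frac{\eta}{2}\,a\parens*{\sign\langle a,p\rangle-\sign\langle a,q\rangle}}=\frac{\sqrt{2\pi}}{2}\sqrt{\frac{2}{\pi}}\,(p-q)=p-q .
\end{align*}
Coordinate restriction commutes with expectation. Hence, for $U=\supp(p)\cup\supp(q)\cup J$, we get $\E\,\hAJ(p,q)=p-q$, because $p-q$ is already supported in $U$. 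So the claim reduces to bounding the deviation $\hAJ(p,q)-\E\,\hAJ(p,q)$.

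\textbf{Step 2 (fixed pair and fixed $U$ with $\card U\le 3s$).} Only rows with $\sign\langle a_i,p\rangle\ne\sign\langle a_i,q\rangle$ contribute. The number $k$ of such rows is $\mathrm{Bin}(m,\theta/\pi)$, where $\theta=\angle(p,q)\asymp\norm{p-q}_2$. Condition on $k$ and on which rows disagree. The restricted vectors $\cT_U(a_i)$ from the disagreement wedge are then i.i.d. and have the following structure.
\begin{itemize}
\item Their component in $\mathrm{span}(p,q)$ is controlled by the wedge geometry.
\item Their component orthogonal to $\mathrm{span}(p,q)$ inside $U$ is an independent standard Gaussian in at most $3s$ dimensions.
\end{itemize}
The deviation is $\frac{\eta}{m}$ times a centered sum of $k$ such vectors. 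Bernstein-type bounds, together with a $\tfrac12$-net over unit vectors in $\R^U$, then give a norm bound of order
\begin{align*}
\frac{1}{m}\sqrt{k\,(s+\log(1/\rho'))}\lesssim\sqrt{\frac{\norm{p-q}_2\,(s+\log(1/\rho'))}{m}},
\end{align*}
with probability $1-\rho'$. Concentration of $k$ around $m\theta/\pi$ also holds. With $m\gtrsim s\log(\cdot)/\delta$, this deviation is at most $c_1\sqrt{\delta\norm{p-q}_2}$.

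\textbf{Step 3 (uniformity).} Take a $\delta'$-net, with $\delta'\approx\delta/\polylog$, of $\Sigma_s^n\cap\cS^{n-1}$. This net has size $\binom{n}{s}(C/\delta')^{s}$. Union-bound Step 2 over all net pairs and over all $\binom{n}{s}$ choices of $J$. This union bound is the source of the $\frac{s}{\delta}\log(en/s)$ and $\frac{1}{\delta}\log(1/\rho)$ terms in the sample requirement. To pass from net points $p_0,q_0$ to arbitrary $p,q$, it suffices to control $\hAJ(p,p_0)$ and $\hAJ(q,q_0)$ uniformly over $\norm{p-p_0}_2\le\delta'$. This is the \textbf{main obstacle}, because the sign map is discontinuous and Lipschitz arguments are unavailable.

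I would handle it in two parts.
\begin{itemize}
\item First, show that uniformly over all such neighborhoods, at most $O(\delta m)$ hyperplanes $\{z:\langle a_i,z\rangle=0\}$ separate $p$ from $p_0$. To do this, bound the number of rows with $\abs{\langle a_i,p_0\rangle}\lesssim\delta'\sqrt{\log}$ by a Chernoff bound at each net point, and control $\max_i\norm{\cT_{U}(a_i)}_2$ on sparse supports.
\item Second, for any set $S$ of at most $O(\delta m)$ rows, bound $\norm{\sum_{i\in S}\pm\cT_U(a_i)}_2$ uniformly via the sparse operator norm $\sup_{\card S\le\delta m}\norm{A_{S,U}}$. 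This norm is at most $\sqrt{\delta m}\cdot\sqrt{\delta m\log(e/\delta)+s\log(en/s)}$.
\end{itemize}
After multiplying by $\eta/m$, this yields $O(\delta\sqrt{\log(2e/\delta)})$. That is acceptable as the $c_2\delta$ term once the $\sqrt{\log(2e/\delta)}$ factors are absorbed into the sample size; these are exactly the factors appearing in the stated bound on $m$.

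Combining the three steps via the triangle inequality gives
\begin{align*}
\norm{(p-q)-\hAJ(p,q)}_2\le c_1\sqrt{\delta\norm{p-q}_2}+c_2\delta .
\end{align*}
The remaining care point is pairs with $\norm{p-q}_2\lesssim\delta$. For these, Step 3 alone suffices, since both sides are then $O(\delta)$.
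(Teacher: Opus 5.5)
The paper does not prove this Fact; it imports it from \cite{Matsumoto_noiseless_2024}. Your overall architecture is the right one, and Steps~1--2 check out. Conditioning on the disagreement set and splitting off the component orthogonal to $\mathrm{span}(p,q)$ does give a deviation of order $\sqrt{\norm{p-q}_2(s+\log(1/\rho'))/m}+(s+\log(1/\rho'))/m$. The sparse-submatrix bound in your second bullet, combined with running the whole argument at scale $\delta/\sqrt{\log(2e/\delta)}$, is indeed what produces the $\sqrt{\log(2e/\delta)}$ factors in the stated $m$. The gap is in the first bullet of Step~3: the uniform bound $\card{S(p,p_0)}=O(\delta m)$, where $S(p,p_0)$ is the set of rows whose hyperplanes separate $p$ from $p_0$.

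Your argument bounds $S(p,p_0)$ by the set of rows whose hyperplane meets the $\delta'$-ball around $p_0$ inside $\R^V$, with $V=\supp(p)\cup\supp(p_0)$. These are the rows with $\abs{\dotprod{a_i,p_0}}\le\delta'\norm{\cT_V(a_i)}_2$. But $\norm{\cT_V(a_i)}_2\asymp\sqrt{s}$ for essentially every row; your own max-norm bound is $\sqrt{s\log(en/s)+\log m}$, not $\sqrt{\log}$. Since $\dotprod{a_i,p_0}$ is independent of the part of $a_i$ orthogonal to $p_0$, about $m\delta'\sqrt{s}$ hyperplanes meet that ball. So no Chernoff bound on this union can give $O(\delta m)$ with $\delta'\approx\delta/\polylog$.

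Shrinking the net to $\delta'\approx\delta/\sqrt{s\log(en/s)}$ repairs the count, but it adds $\tfrac{s}{2}\log s$ to the net entropy. The leading term then becomes $\tfrac{s}{\delta}\log(en)\sqrt{\log(2e/\delta)}$, which is strictly weaker than the statement when $s=n^{1-o(1)}$.

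What is missing is a count for each individual $p$ rather than for the union over the ball. For example, every $i\in S=S(p,p_0)$ has $\abs{\dotprod{a_i,p_0}}\le\abs{\dotprod{a_i,p-p_0}}$, so
\begin{align*}
\sum_{i\in S}\abs{\dotprod{a_i,p_0}}\le\delta'\sqrt{\card S}\,\norm{A_{S,V}}\lesssim\delta'\sqrt{\card S}\sqrt{\card S\log(em/\card S)+s\log(en/s)+\log(1/\rho)}.
\end{align*}
On the other side, a Chernoff bound on the order statistics of $\abs{\dotprod{a_i,p_0}}$ at each net point gives $\sum_{i\in S}\abs{\dotprod{a_i,p_0}}\gtrsim\card{S}^2/m$ whenever $\card S\gtrsim\log(N/\rho)$, where $N$ is the net size. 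Comparing the two yields
\begin{align*}
\card S\lesssim m\delta'\sqrt{\log(e/\delta')}+\log(N/\rho).
\end{align*}
This is already $O(\delta m)$ for $\delta'\approx\delta/\sqrt{\log(2e/\delta)}$, with no $\sqrt{s}$ loss. A relative VC bound for halfspaces in $\R^V$ on each ball also works. With that replacement, your net scale and your claimed sample complexity go through.
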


\begin{fact}[Gaussian matrices satisfy adversarial \RAIC{}; \cite{Matsumoto_robust_2023}, with the sample complexity corrected via \cite{Matsumoto_noiseless_2024}]
\label{fact:gaussian-matrices-satisfy-adv-raic}
There exist absolute constants $b_1, b_2, b_3, b_4 > 0$ such that the following holds. Let $1 \le s \le n$, $0 < \delta < 1$, $0 < \rho < 1$, and $0 \le \tau \le 1$. Let $A \in \mathbb{R}^{m \times n}$ have independent $\mathcal{N}(0, I_n)$ rows. If
\begin{align*}
	m
	\gtrsim
	\frac{s}{\delta} \log\parens*{\frac{en}{s}} \sqrt{\log\parens*{\frac{2e}{\delta}}}
	+ \frac{s}{\delta} \log^{3/2}\parens*{\frac{2e}{\delta}}
	+ \frac{1}{\delta} \log\parens*{\frac{1}{\rho}} \sqrt{\log\parens*{\frac{2e}{\delta}}}
\end{align*}
then, with probability at least $1 - \rho$, the following holds simultaneously for every $f \in \mathcal{F}_A(\tau)$, every $p, q \in \Sigma_s^n \cap S^{n-1}$, and every $J \subseteq [n]$ with $\abs{J} \le s$:
\begin{align*}
	\norm*{(p - q) - h_{f;A;J}(p, q)}_2
	\le
	b_1 \sqrt{\delta \norm{p - q}_2}
	+ b_2 \delta
	+ b_3 \sqrt{\delta \tau}
	+ b_4 \tau \sqrt{\log\parens*{\frac{2e}{\tau}}},
\end{align*}
with the convention that $\tau\sqrt{\log(2e/\tau)} = 0$ when $\tau = 0$.
\end{fact}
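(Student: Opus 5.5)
The plan is to reduce the adversarial statement to the noiseless \RAIC{} of \cref{fact:gaussian-matrices-satisfy-raic} plus a uniform bound on a sparse ``corruption'' term. Fix $f\in\cF_A(\tau)$, $p,q\in\Sigma_s^n\cap S^{n-1}$ and $J$ with $\card J\le s$, and write $U=\supp(p)\cup\supp(q)\cup J$, so $\card U\le 3s$. Since $f(p)-\sign(Aq)=(\sign(Ap)-\sign(Aq))+(f(p)-\sign(Ap))$, linearity of $\cT_U$ gives
\begin{align*}
h_{f;A;J}(p,q)=\hAJ(p,q)+e_f(p),\qquad e_f(p)\triangleq \frac{\eta}{2m}\cT_U\parens*{A^\top\parens*{f(p)-\sign(Ap)}}.
\end{align*}
By the triangle inequality,
\begin{align*}
\norm{(p-q)-h_{f;A;J}(p,q)}_2\le\norm{(p-q)-\hAJ(p,q)}_2+\norm{e_f(p)}_2 .
\end{align*}
The first term is bounded by $b_1\sqrt{\delta\norm{p-q}_2}+b_2\delta$ on the event of \cref{fact:gaussian-matrices-satisfy-raic}, whose sample requirement is exactly the one assumed. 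So it remains to bound $\norm{e_f(p)}_2$ by $b_3\sqrt{\delta\tau}+b_4\tau\sqrt{\log(2e/\tau)}$.

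The key observation is that the vector $w=\frac12(f(p)-\sign(Ap))$ has entries in $\set{-1,0,1}$ and is supported on a set $S$ with $\card S\le\tau m$, by the definition of $\cF_A(\tau)$. This removes all dependence on $f$ and $p$ except through $S$ and $U$. Dualizing against a unit vector $u$ supported on $U$ and applying Cauchy--Schwarz gives
\begin{align*}
\norm{e_f(p)}_2=\frac{\eta}{m}\sup_{\substack{\supp(u)\subseteq U\\ \norm{u}_2=1}}\sum_{i\in S}w_i\langle a_i,u\rangle
\le\frac{\eta\sqrt{\tau m}}{m}\sup_{\card S\le\tau m}\;\sup_{u\in\Sigma_{3s}^n\cap S^{n-1}}\norm{A_Su}_2 .
\end{align*}
Here $A_S$ denotes the rows of $A$ indexed by $S$. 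The right-hand side is a purely random-matrix quantity, uniform over all corruption patterns, targets, iterates and sets $J$. So a single high-probability event handles everything, and no union bound over the (uncountable) family $\cF_A(\tau)$ is needed.

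The main step is the estimate on this double supremum. I will show that with probability at least $1-\rho$,
\begin{align*}
\sup_{\card S\le\tau m}\;\sup_{u}\norm{A_Su}_2\lesssim\sqrt{\tau m\log(2e/\tau)}+\sqrt{s\log(en/s)}+\sqrt{\log(1/\rho)} .
\end{align*}
The argument runs in three steps.
\begin{itemize}
\item For fixed $S$ and fixed support $V$ with $\card V=3s$, the Gaussian operator-norm bound gives $\norm{A_{S,V}}\le\sqrt{\tau m}+\sqrt{3s}+t$ with probability at least $1-e^{-t^2/2}$.
\item I then union bound over the $\binom{m}{\tau m}\le e^{\tau m\log(e/\tau)}$ choices of $S$ and the $\binom{n}{3s}\le e^{3s\log(en/3s)}$ choices of $V$.
\item Choosing $t$ of order $\sqrt{\tau m\log(e/\tau)}+\sqrt{s\log(en/s)}+\sqrt{\log(1/\rho)}$ absorbs both entropy terms and yields the displayed bound.
\end{itemize}

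Plugging in and using $\eta=\sqrt{2\pi}$, I get
\begin{align*}
\norm{e_f(p)}_2\lesssim\tau\sqrt{\log(2e/\tau)}+\sqrt{\frac{\tau s\log(en/s)}{m}}+\sqrt{\frac{\tau\log(1/\rho)}{m}} .
\end{align*}
Under the assumed $m\gtrsim\frac{s}{\delta}\log(en/s)+\frac1\delta\log(1/\rho)$, the last two terms are $\lesssim\sqrt{\delta\tau}$, which gives the $b_3,b_4$ terms.

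The delicate point I expect is the bookkeeping of logarithmic factors in the union bound over $S$: it should produce $\tau\sqrt{\log(2e/\tau)}$ and not something worse. A related subtlety is that when $\tau m<1$, no sign is flipped and the corruption term is zero, which matches the convention $\tau\sqrt{\log(2e/\tau)}=0$ at $\tau=0$. Finally, I intersect this event with the \RAIC{} event, each taken with failure probability $\rho/2$, which completes the proof.
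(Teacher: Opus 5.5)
The paper gives no proof of this Fact. It imports the adversarial \RAIC{} from \cite{Matsumoto_robust_2023} and substitutes the sample complexity of the noiseless \RAIC{} from \cite{Matsumoto_noiseless_2024}. Your argument is a correct, self-contained derivation from \cref{fact:gaussian-matrices-satisfy-raic}. The split $h_{f;A;J}=\hAJ+e_f(p)$ is exact. Dualizing against unit vectors supported on $U$ and applying Cauchy--Schwarz reduces $\norm{e_f(p)}_2$ to $\frac{\eta\sqrt{\tau m}}{m}\sup_{S,V}\norm{A_{S,V}}$.

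The logarithmic bookkeeping you flagged works out. The map $x\mapsto x\log(em/x)$ is increasing on $(0,m]$, so $\log\binom{m}{k}\le k\log(em/k)\le\tau m\log(e/\tau)$ for every $k\le\tau m$. The union bound therefore produces exactly $\tau\sqrt{\log(2e/\tau)}$ after multiplying by $\eta\sqrt{\tau m}/m$. Two remaining details are cosmetic:
\begin{itemize}
\item When $3s>n$, take $V=[n]$.
\item Splitting $\rho$ into $\rho/2+\rho/2$ adds a $\log 2$, which is absorbed by the $\frac{s}{\delta}\log^{3/2}(2e/\delta)$ term.
\end{itemize}

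Your route makes the remark ``sample complexity corrected via \cite{Matsumoto_noiseless_2024}'' transparent. The corruption term only needs $m\gtrsim\delta^{-1}\parens{s\log(en/s)+\log(1/\rho)}$, so the adversarial requirement is inherited unchanged from the noiseless one. Because your bound holds uniformly over row subsets $S$, you also avoid any union bound over the uncountable class $\cF_A(\tau)$ or over the pair $(p,q)$. The only entropy you pay is for the row subset $S$ and a column support $V\supseteq U$.

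What the citation supplies, and your proof does not, is explicit numerical constants. You get $b_1,b_2$ equal to the noiseless \RAIC{} constants, but $b_3,b_4$ only as unspecified absolute constants. The paper's Numerical Ledger uses specific values of $b_1,\dots,b_4$ taken from the cited analysis. Since the Fact only asserts that such constants exist, this is not a gap.
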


\section{Overview of Techniques}
%Since one-bit measurements discard scale, all information pertinent to the magnitude is lost, and the target is the direction of $x^\star$. The natural recovery guarantee is therefore stated for the normalized final output, but this does not impose any algorithmic requirement to normalize the intermediate iterates. Even so, per-iterate normalization has repeatedly appeared as a crucial component in the convergence guarantees of \textcite{pmlr-v291-matsumoto25a, Matsumoto_noiseless_2024, Matsumoto_robust_2023} and \textcite{friedlander_nbiht}. The purpose of this analysis is to isolate when normalization is merely analytically convenient, and when it is genuinely needed for last-iterate stability.
%The purpose of this analysis is to precisely isolate when this goes beyond analytical convenience, and proves necessary for convergence. %is necessary for convergence and not just analytically convenient. 

\paragraph{Population geometry.} Our proofs crucially depend on the geometric structure of the problem. We will first isolate the behavior of the iterates at the population level. Consider noiseless measurements $y=\sign(Ax^\star)$ where $A$ has standard Gaussian rows $\parens*{a_1, \cdots , a_m}$, and a direction $u\in \cS^{n-1}$. By rotational invariance of the Gaussian distribution,
\begin{align*}
    \E_{a\sim \cN(0,I_n)}{a\operatorname{sign}(\dotprod{a,u})} = \sqrt{\frac{2}{\pi}}u
\end{align*}
%Any $a\in A$ may be decomposed with respect to its alignment with a direction $u\in\cS^{n-1}$
%\begin{align*}
%	a = g u + b
%\end{align*}
%where $g=\dotprod{a,u}\sim \cN(0,1)$ and $b\perp u$ is independent of $g$, with $\E{b}=0$. Since the value of $\sign(0)$ is irrelevant for a Gaussian random variable,
%\begin{align*}
%	\E{a\operatorname{sign}(\dotprod{a,u})}
%	=
%	%\E{(g u+b)\operatorname{sign}(g)} \\
	%&=
%	\E{\abs{g}}u+\E{b}\E{\operatorname{sign}(g)} %\\
%	%&
%    =
%	\sqrt{\frac{2}{\pi}}u
%\end{align*}
Consequently, for unit vectors $p,q\in\cS^{n-1}$,
\begin{align}\label{eq:population-correction}
	\E *{\frac{\sqrt{2\pi}}{2}a\parens{\operatorname{sign}(\dotprod{a,p})-\operatorname{sign}(\dotprod{a,q})}}
	=
	p-q.
\end{align}
The random vector inside the expectation is precisely the single-measurement contribution to the BIHT correction $h_A(p,q)$ in \cref{eq:hA-def}; indeed, for fixed $p,q$, the empirical correction $h_A(p,q)$ is the average of these contributions over the rows of $A$.
%Note that this is exactly the per-iterate correction term in Algorithm~\ref{alg:biht}, with $p=x^\star, q = \xt$, and constant step-size $\eta = \sqrt{2\pi}$. 
%For $p = \xstar$ and $q = \xt$, the expectation is over the correction term in \cref{alg:biht}. Substituting the notation from \cref{eq:hA-def} gives
%\begin{align*}
%    \E *{h_A(\xstar,\xt)} = \xstar-\xt
	%:= \frac{\sqrt{2\pi}}{m} A^\top \cdot \frac{1}{2}
	%\parens *{\operatorname{sign}(A p)-\operatorname{sign}(A q)}
%\end{align*}
%for the correction term. 
%For normalized BIHT, the population identity aligns perfectly with the update. 

Consider the corresponding population dynamics, in which the empirical correction is replaced by its expectation over a row of $A$ and the current state is held fixed. If the fixed state $z$ is a unit vector, then \cref{eq:population-correction} applies to the pair $(\xstar,z)$, and the population pre-threshold update is
%$\ut = \nicefrac{\xt}{\norm{\xt}_2}$, then, prior to thresholding, the next iterate is
\begin{align*}
	z+\E{h_A(x^\star,z)}
	=
	z+(x^\star-z)
	=
	x^\star
\end{align*}
We often refer to this pre-threshold vector, and its empirical counterpart, as the "dense" iterate. Observe that, prior to thresholding, the update is exactly aligned with the target when $z\in \cS^{n-1}$. Therefore, keeping every iterate on the unit sphere makes the population dynamics nearly tautological: each correction points exactly from the current iterate to the target. Hard thresholding does not disrupt this geometry. Indeed, if the dense vector is close to $\xstar$, then every
hard-thresholded output is also close to $\xstar$ up to constant factors,
since $\xstar\in\Sigma_s^n$ is a valid comparison point for projection
onto $\Sigma_s^n$ (see \cref{lem:2-approx-projector}).
%Indeed, if  the dense vector is close to $\xstar$, then it will remain so after thresholding (up to constant factors) since $x^\star \in \Sigma_s^n$ is a valid comparison point for projection onto $\Sigma_s^n$ (see \cref{lem:2-approx-projector}). 
%This motivates prior analyses to 
%Thus, normalizing every iterate makes the analysis almost tautological at the population level -- the correction term estimates the difference from the current iterate to the target, and the update simply adds this difference to force the next iterate closer to the target direction. Since $x^\star$ is $s-$ sparse, it is a valid comparison point for projection onto $\Sigma_s^n$. Therefore, if $\vtp$ is close to $\xstar$, then, after thresholding, $\xtp \in \cT_s\parens{\vtp}$ is also close to $\xstar$ up to constant factors (see \cref{lem:2-approx-projector}). 

Without normalization, the population dynamics does not enjoy this property. If $z\neq 0$, we write
%For the remainder of this exposition, we will isolate the iterate's direction from it's norm. In particular, for iterate $\xt$ at time $t$, we write
\begin{align*}
	z= r u %,
	%\qquad
	%\rt=\norm{\xt}_2,
	%\qquad
	%\ut=\frac{\xt}{\norm{\xt}_2}
\end{align*}
where the norm $r = \norm{z}_2$ is not necessarily unit, and $u = z/\norm{z}_2$ represents the direction. Since the sign function itself is norm invariant, so is the signed correction term, and 
\begin{align*}
	h_A(x^\star,z)=h_A(x^\star,u)
\end{align*}
Substituting the above in \cref{eq:population-correction} implies that the population correction $\E{h_A\parens *{\xstar, z}}$ still estimates $\xstar - u$, and not $x^\star-z$. Without per-iterate normalization, the dense next iterate is therefore
%population update is therefore
\begin{align}\label{eq:unnormalized-old-pop-error}
    z+\E{h_A(x^\star,z)}
	=
	ru +x^\star-u
	=
	x^\star+(r-1)u
\end{align}
If the above dense iterate is compared against $x^\star$, the term $(r-1)u$ appears as an additional, uncontrolled radial error that the normalized variant avoids by forcing $r= 1$.

We argue that this comparison is the wrong one. Without normalization, the update should not be compared to $x^\star$ but to $r x^\star$ instead, i.e., the vector along the target direction, but with the current norm. Rewriting the unnormalized population update as
\begin{align}\label{eq:unnormalized-new-pop-error}
	r {u}+x^\star-u
	&=
	r x^\star+(r-1)({u}-x^\star)
\end{align}
makes this phenomenon apparent. When compared against $r x^\star$, the radial error no longer appears as the additive error in \cref{eq:unnormalized-old-pop-error}. Together with the directional component, it gives the combined per-iterate error $(r-1)({u}-x^\star)$. This is indeed the right geometry for one-bit recovery, since an error that is purely radial should not meaningfully hurt the final answer. If the current state has the right direction, with $u = x^\star$, then the vector $z=r x^\star$ is already perfectly recovered after the final normalization, for every $r>0$. An analysis that insists on controlling $\norm{z - \xstar}_2$ penalizes the algorithm for failing to learn a scale that the measurements never contained in the first place. 
%Of course, the analysis should not penalize the algorithm for failing to learn a scale that the measurements never contained in the first place. 

Moreover, since the radial discrepancy only appears as $\abs{r-1}\norm{u-\xstar}_2$, as long as the norm remains in a constant-sized window around $1$, the population error around the moving reference is controlled by the current directional error. As a result, our proof makes no attempt to show that the norms of the iterates converge to one. Given the extreme quantization of the measurements, such a statement would be unnatural: sign
measurements carry no radial information, and the iterates have no apparent reason to prefer unit-scale representatives of the target direction. What we need is strictly weaker: the norm must not drift out of this window during the finitely many iterations needed to recover the direction.

Returning to the empirical trajectory, at time $t$ the moving reference is
$\rt\xstar$, where $\rt = \norm{\xt}_2$. Since $\rt\xstar$ is $s$-sparse, it is a valid comparison point for projection onto $\Sigma_s^n$. Thus, by \cref{lem:2-approx-projector}, hard thresholding preserves closeness of the dense iterate to $\rt\xstar$ up to constant factors. Control of this distance is enough to control both the directional error and the radial mismatch. Indeed, since $\xstar$ has unit norm, $\norm{\rt\xstar}_2=\rt$, and by triangle inequality
\begin{align*}
    \abs*{\rtp-\rt}
    =
    \abs*{\norm{\xtp}_2-\norm{\rt\xstar}_2}
    \le
    \norm{\xtp-\rt\xstar}_2 .
\end{align*}
Consequently, the proof reduces to a coupled induction: the directional
error contracts, while the radius moves by at most the same one-step error.%Normalizing the iterates is analytically convenient when the reference point is $\xstar$. Instead of forcing the iterates to the unit sphere, we instead change the reference vector to move with the current norm, which gives $\rt\xstar$ as the reference. 
%Thus far, our discourse has relied on the population estimates. In order to translate these insights to a uniform finite-sample statement, we will need the following property of random matrices, that suggests the map $x\to h_{A}(x)$ preserves Euclidean distances. 
%Translating this argument to a uniform finite-sample statement is deferred to the proof sketches and appendices, but we will briefly discuss the technical issues that arise and how we simply lift tools from prior work to resolve said concerns. 
%First, the empirical correction $h_A\parens {p,q}$ is only approximately equal to its expectation $p-q$. This is largely handled by the following restricted approximate invertibility condition (RAIC), which gives the finite-sample replacement for the population identity in \cref{eq:population-correction}.
\paragraph{Finite-sample replacement and \RAIC{}. } Thus far, our discussion has relied on population estimates. To turn these insights into a uniform finite-sample statement, we use the restricted approximate invertibility condition of Gaussian random matrices (\cref{fact:gaussian-matrices-satisfy-raic}), which asserts that the empirical sign correction behaves like its expectation, uniformly over sparse directions and over the auxiliary supports that may arise from hard thresholding. The latter uniformity is important because the support selected by the next thresholding step is chosen adaptively from the data. Once that is fixed, the realized thresholded iterate can be analyzed by restricting the dense update to the target support, the current support, and the newly selected coordinates. The newly selected coordinates can then be treated as the auxiliary support in \RAIC{}, and the uniform estimate applies to the realized update via \cref{lem:support-restriction}. 

Given the moving-reference decomposition, it is natural to ask for a different route around normalization: extend \RAIC{} itself to non-unit iterates, so that the empirical correction directly estimates $\xstar - \xt$ rather than $\xstar - \ut$. We argue that such an extension is impossible. Taking $q=cp$ for any $c>0, c\neq 1$, gives $h_{A;J}(p,q) = 0$ while $\norm{p-q}_2 = \abs*{1-c}\norm{p}_2$ is arbitrary. Consequently, no inequality of the form
\begin{align*}
	\norm{(p - q) - h_{A;J}(p, q)}_2
	\le
	a_1\sqrt{\delta \norm{p - q}_2} + a_2\delta
\end{align*}
can hold off the unit sphere, for any sensing matrix and any set of constants. %The barrier is information-theoretic: sign measurements carry no radial information, so no function can estimate a radial displacement. 

The proof of \RAIC{} in \textcite{Matsumoto_noiseless_2024, Matsumoto_robust_2023} reflects this at every step: each of its quantitative ingredients is calibrated in the angular metric, which agrees with the Euclidean metric only on the sphere. The number of measurements separating a pair $(u, v)$ concentrates around $\nicefrac{\theta_{u,v}\, m}{\pi}$, a function of the angle alone, and the deviation terms are sized by the spherical distance.  The union bounds run over covers of the compact set $\Sigma_s^n \cap\cS^{n-1}$, and, in the small-distance regime, over the sign patterns realized within an angular cap -- a count to which radial perturbations contribute nothing, since positive rescaling creates no new sign patterns. In sum, \RAIC{} is a statement about directions because directions are all that the measurements contain.

In prior analyses, projecting every iterate onto the unit sphere has been a means to keep the trajectory in the domain of \RAIC{}, and this dependence is not special to \BIHT{}. For instance, \textcite{chen_yuan_quantized_2024} generalize \RAIC{} to star-shaped signal sets, and analyze projected gradient descent for dithered and multi-bit quantization, while \textcite{chen_ding_xia_yuan_2025} position \RAIC{} as the nonlinear analogue of the restricted isometry property \cite{candes_tao_2005}. In both works, once the signal norm is unrecoverable, every iterate is pulled back to the unit sphere at each step, for the same reason as it would be for us: to keep it in the domain where RAIC holds. Our proof instead keeps this domain intact and moves everything else. In fact, the same invariance that makes the off-sphere extension impossible is what makes it unnecessary. Since $h_A\parens{\xstar, \xt} = h_A\parens{\xstar, \ut}$, every invocation of \RAIC{} in our analysis is applied to the pair $\parens{\xstar, \ut}$ of unit vectors, and the norm of the iterate is absorbed into the comparison point $\rt \xstar$ rather than into the domain of the probabilistic estimate. The resulting radial error $\parens{\rt - 1}\parens{\ut - \xstar}$ is a quantity \RAIC{} could never control, and it never needs to, since it is controlled by the algorithm's own bookkeeping. 

\paragraph{Scalar obstruction under corruptions. } Under sign corruptions, the observed labels differ from $\sign(A\xstar)$ in at most a $\tau$ fraction of coordinates, and the behavior of the unnormalized trajectory splits into two phases. \textcite{pmlr-v291-matsumoto25a} observe in numerical simulations that, without normalization, the \BIHT{} error curve first drops to a robust error floor and then begins to oscillate around a constant error. We prove that this oscillation is unavoidable for fixed step-sizes with a remarkably simple one-dimensional construction. 

For $n=s=1$, hard-thresholding is the identity. For a fixed step-size, a flipped set of bits creates two constant pushes: one when the current iterate is positive, and another when it is negative. Concretely, the update reduces to a map of the form 
\begin{align*}
	\xtp
	=
	\begin{cases}
		\xt-\alpha, & \xt>0 \\
		\xt+\gamma, & \xt<0
	\end{cases}
\end{align*}
for positive constants $\alpha$ and $\gamma$ determined by the total weights of corrupted and clean measurements, respectively. This explains the oscillation; a positive iterate is pushed to the negative side, and vice versa. Unless the trajectory lands exactly at zero, which is a zero probability event for our purposes, it crosses the origin indefinitely. Perhaps surprisingly, this scalar instance persists even if a single measurement is flipped, causing the unnormalized iterates to oscillate even if only one corrupted bit lies within largely clean measurements. 

\paragraph{Early recovery before oscillations. } The lower bound precludes last-iterate convergence, but the simulations point to the existence of an early iterate that reaches the robust error floor. To derive the positive counterpart, we observe that the deterministic skeleton survives corruptions. The moving-reference decomposition is unchanged, and the right point of comparison is still $\rt\xstar$. What changes is the accuracy with which the empirical correction approximates its population counterpart, which is captured by the restricted approximate invertibility condition of Gaussian matrices under adversarial perturbations (see \cref{fact:gaussian-matrices-satisfy-adv-raic}). The noiseless error terms in the one-step estimate are now replaced by the same finite-sample counterparts, with additional corruption-dependent contributions, $\sqrt{\epsilon\tau}$ and $\tau\sqrt{\log\parens*{\nicefrac{2e}{\tau}}}$, that quantify the price of allowing the corruption set to be chosen adversarially. The same coupled induction as before contracts until it reaches the robust error floor $\epsilon + O\parens*{\sqrt{\epsilon\tau}+\tau\sqrt{\log\parens*{\nicefrac{2e}{\tau}}}}$, and proves that the unnormalized trajectory reaches the same adversarial accuracy scale as the normalized surrogate at an early iterate. 

The trajectory itself is $\tau$-agnostic, but the certificate identifying the relevant accuracy scale depends on $\tau$, which makes \cref{thm:intro-hitting} a hitting-time guarantee. Without knowledge of $\tau$, or at least an upper bound on it, the theorem does not identify which iterate should be returned. With an a-priori budget $\tau \le \tau_0$, the hitting-time guarantee naturally upgrades to a deterministic stopping rule (\cref{cor:adversarial-deterministic-stopping}), and the recovered accuracy is stable over a window of length $\Omega(1/\gamma(\epsilon,\tau_0))$ after the stopping time (\cref{cor:adversarial-stable-window}).

\section{Main Results}

First, we state and prove that BIHT, as proposed in \textcite{Jacques2011Robust1C}, converges with optimal number of noiseless Gaussian measurements. 

\subsection{Noiseless Measurements}
%\prateeti{TODO: remove the extra $\log\parens *{1/\delta}$ factors. }
\begin{theorem}[Uniform convergence with noiseless measurements] \label{thm:noiseless} Let $1\le s\le n$, $0<\epsilon\le 1/64$, and $0<\rho<1$. Let
$A\in\R^{m\times n}$ have independent $\cN(0,I_n)$ rows. Given
%\begin{align*}
%m = \tO\parens*{ \frac{s}{\epsilon}\log\parens*{\frac{en}{s}} + \frac{1}{\epsilon}\log\parens*{\frac{1}{\rho}} } 
%\end{align*}
\begin{align*} 
m \gtrsim \frac{s}{\epsilon}\log\parens*{\frac{en}{s}} \sqrt{\log\parens*{\frac{2e}{\epsilon}}} + \frac{s}{\epsilon}\log^{3/2}\parens*{\frac{2e}{\epsilon}} + \frac{1}{\epsilon}\log\parens*{\frac{1}{\rho}} \sqrt{\log\parens*{\frac{2e}{\epsilon}}} 
\end{align*}
noiseless measurements of the form $y = \sign(A\xstar)$, with probability at least $1-\rho$ over the draw of $A$, the following holds: for every $x^\star\in\Sigma_s^n\cap S^{n-1}$, every initialization $x^{(0)}\in\Sigma_s^n\cap S^{n-1}$, and every sequence of hard-thresholding choices, normalization-free BIHT, when run for $T = 1+\ceil*{\log_2\parens*{\frac{1}{64\epsilon}}}$ iterations, returns an estimate $\widehat{x}_{\mathrm{BIHT}}$ satisfying
\begin{align*}
    \norm{\widehat {x}_{\mathrm{BIHT}}-\xstar}_2 \le \epsilon
\end{align*} 
\end{theorem}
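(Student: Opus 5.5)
We plan to run the coupled induction from the overview, with the moving reference $r^{(t)}x^\star$ in place of $x^\star$. First apply \cref{fact:gaussian-matrices-satisfy-raic} with $\delta=c\,\epsilon$ for a small absolute constant $c$; the stated sample budget then gives $(s,n,\delta,c_1,c_2)$-\RAIC{} with probability at least $1-\rho$. Everything after this is deterministic, so it holds uniformly over targets, initializations and tie-breaks.

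Write $r_t=\norm{x^{(t)}}_2$, $u_t=x^{(t)}/r_t$ and $e_t=\norm{u_t-x^\star}_2$. Fix $t$, a realized output $x^{(t+1)}\in\cT_s(v^{(t+1)})$, and $J=\supp(x^{(t+1)})$. Put $U=\supp(x^\star)\cup\supp(x^{(t)})\cup J$. By \cref{lem:support-restriction}, $x^{(t+1)}\in\cT_s(\cT_U(v^{(t+1)}))$. Since $y=\sign(Ax^\star)$, we have $\cT_U(v^{(t+1)})=x^{(t)}+\hAJ(x^\star,x^{(t)})$, and by \cref{lem:sign-norm-invariant} this equals $r_tu_t+\hAJ(x^\star,u_t)$. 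Using the decomposition \cref{eq:unnormalized-new-pop-error},
\[
\cT_U(v^{(t+1)})-r_tx^\star=(r_t-1)(u_t-x^\star)+\bigl[\hAJ(x^\star,u_t)-(x^\star-u_t)\bigr].
\]
\RAIC{} bounds the bracket, since $|J|\le s$ and $x^\star,u_t$ are sparse unit vectors. Because $r_tx^\star\in\Sigma_s^n$, \cref{lem:2-approx-projector} then gives
\[
\norm{x^{(t+1)}-r_tx^\star}_2\le 2|r_t-1|\,e_t+2c_1\sqrt{\delta e_t}+2c_2\delta=:E_t .
\]
From this single bound we extract two consequences. By the triangle inequality, $|r_{t+1}-r_t|\le E_t$. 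If $E_t\le r_t/2$, \cref{lem:normalization-around-radial-target} gives $e_{t+1}\le 4E_t/r_t$.

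The induction hypothesis will be $|r_t-1|\le 1/64$ together with $e_t\le \max(2^{-t}\cdot\text{const},\,C\epsilon)$. For $t=0$ we have $r_0=1$ and $e_0\le 2$. The first step is the one to handle separately. Since $r_0=1$, the radial term vanishes and $E_0\le 2c_1\sqrt{2\delta}+2c_2\delta$, which is small once $c$ is small. This gives a small $e_1$ and $|r_1-1|\le E_0$. For $t\ge1$ we use $|r_t-1|\le1/64$ to get $e_{t+1}\le 4(1+o(1))\bigl(e_t/32+2c_1\sqrt{\delta e_t}+2c_2\delta\bigr)$. The linear part contracts by roughly a factor $1/8$. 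The term $\sqrt{\delta e_t}$ is handled with AM-GM as $\le e_t/16+O(\delta)$. Together these give $e_{t+1}\le e_t/2+O(\delta)$, so $e_t$ halves geometrically until it reaches $O(\delta)=O(\epsilon)$. Choosing $c$ small makes that floor at most $\epsilon$ after the stated $T=1+\lceil\log_2(1/(64\epsilon))\rceil$ steps.

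The main obstacle is keeping the radius in its window, since the per-step drifts accumulate. We have $|r_T-1|\le\sum_t E_t$. Each $E_t$ is at most $e_t/32+O(\sqrt{\delta e_t}+\delta)$. The sum of the $e_t$ terms is geometric. The $\sqrt{\delta e_t}$ terms also sum geometrically, to $O(\sqrt{\delta})$ times the first term. The $\delta$ terms contribute $T\delta=O(\epsilon\log(1/\epsilon))$, which is small because $\epsilon\le1/64$. We must check that this total stays below $1/64$, and the step-1 values must be small enough to start the chain. This is where the absolute constants (the 1/64 window, the choice of $c$) have to be tuned. If the budget is too tight, we would first try a strengthened hypothesis of the form $|r_t-1|\le\frac1{128}+\sum_{k<t}E_k$. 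Finally, since $x^{(T)}\ne0$, the returned $x^{(T)}/\norm{x^{(T)}}_2=u_T$ has error $e_T\le\epsilon$.
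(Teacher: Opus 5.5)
Your proposal is correct and follows essentially the same route as the paper. Both arguments invoke RAIC at scale $\delta\asymp\epsilon$ and restrict to the target, current and realized output supports. Both compare against the moving reference $r_t x^\star$, so that one bound $E_t$ controls the directional contraction (via the normalization lemma) and the radial drift, and both split the drift sum into a geometric part plus an $O(T\epsilon)$ part that is small because $\epsilon\log(1/\epsilon)$ is bounded.

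The only difference is cosmetic. You handle $\sqrt{\delta e_t}$ by AM--GM, and the resulting additive recurrence $e_{t+1}\le e_t/2+K\delta$ should be unrolled to $e_T\le 2^{-(T-1)}e_1+2K\delta$, with $c$ chosen so that $e_1\le 1/128$ and $2K\delta\le\epsilon/2$. Plugging it directly into a max-type hypothesis with exact rate $2^{-t}$ does not quite close. The paper instead bounds every term by $C\max\{e_t,\epsilon\}$ and obtains $e_{t+1}\le\frac12\max\{e_t,\epsilon\}$, which closes its induction directly.
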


\begin{proof}
Let $c_1, c_2$ be the absolute constants in \cref{fact:gaussian-matrices-satisfy-raic}. Choose an absolute constant $L$, so that
\begin{align*}
2c_1\sqrt{\frac{2}{64L}}+2c_2\frac{1}{64L}
\le
\frac{1}{128}
\end{align*}
and
\begin{align*}
\frac{1}{64}+\frac{c_1}{\sqrt L}+\frac{c_2}{L}
\le
\frac{1-\nicefrac{1}{64}}{16}
\end{align*}
Set $\delta=\epsilon/L$. Since $\epsilon\le 1/64$ and $L$ is an absolute constant, the stated lower bound on $m$, after adjusting the hidden absolute constant, is sufficient to invoke \cref{fact:gaussian-matrices-satisfy-raic} at scale $\delta$. Thus, with probability at least $1-\rho$, the matrix $A$ satisfies $(s,n,\delta,c_1,c_2)$-\RAIC{}.

We condition on this event. The remainder of the proof is deterministic and holds uniformly over all $s$-sparse target vectors, sparse-unit initializations, and hard-thresholding choices. Fix arbitrary $\xstar\in\Sigma_s^n\cap S^{n-1}$ and arbitrary $x^{(0)}\in\Sigma_s^n\cap S^{n-1}$. Let $\xt$ denote the unnormalized BIHT iterates. Whenever $\xt\neq0$, define the current norm $\rt\triangleq\norm{\xt}_2$, the current direction $\ut\triangleq x^{(t)}/\norm{x^{(t)}}_2$, and the current error $e_t\triangleq\norm{\ut-\xstar}_2$.

We first show that the first iterate enters the required constant-sized basin from an arbitrary initialization. Let $S^\star=\supp(\xstar)$, $S_0=\supp(x^{(0)})$, and $S_1=\supp(x^{(1)})$. We will also need the coordinate sets $J_1= S_1\setminus(S^\star\cup S_0)$ and $U_1= S^\star\cup S_0\cup J_1$ in order to invoke \cref{fact:gaussian-matrices-satisfy-raic}. Since $S_1$ has size at most $s$, we have $\card{J_1}\le s$, and by construction $S_1\subseteq U_1$. By the support-restriction property of hard thresholding in \cref{lem:support-restriction}, we have 
\begin{align*}
x^{(1)}
\in
\cT_s\parens*{
\cT_{U_1}\parens*{
x^{(0)}+h_A(\xstar,x^{(0)})
}
}
\end{align*}
Since $x^{(0)}$ is supported on $S_0\subseteq U_1$, this is equivalently
\begin{align*}
x^{(1)}
\in
\cT_s\parens*{
x^{(0)}+h_{A;J_1}(\xstar,x^{(0)})
}
\end{align*}
Applying \RAIC{} to the pair $(\xstar,x^{(0)})$ and the set $J_1$, there exists an error vector $\xi_0$ such that
\begin{align*}
h_{A;J_1}(\xstar,x^{(0)})
=
\xstar-x^{(0)}+\xi_0
\end{align*}
with
\begin{align*}
\norm{\xi_0}_2
\le
c_1\sqrt{\delta\norm{\xstar-x^{(0)}}_2}
+
c_2\delta
\end{align*}
%Therefore, the restricted dense vector is exactly $\xstar+\xi_0$. 
Since $\xstar\in\Sigma_s^n$, \cref{lem:2-approx-projector} gives
\begin{align*}
\norm{x^{(1)}-\xstar}_2
\le 2\norm{x^{(0)}+h_{A;J_1}(x^\star, x^{(0)})-\xstar}_2
\le
2\norm{\xi_0}_2
\end{align*}
Both $\xstar$ and $x^{(0)}$ are unit vectors, which gives the trivial upper bound $\norm{\xstar-x^{(0)}}_2\le2$, and 
\begin{align*}
\norm{x^{(1)}-\xstar}_2
\le
2c_1\sqrt{2\delta}+2c_2\delta
\le
2c_1\sqrt{\frac{2}{64L}}+2c_2\frac{1}{64L}
\le
\frac{1}{128}
\end{align*}
Since $\norm{\xstar}_2=1$, this implies $x^{(1)}\neq0$. Moreover, 
\begin{align*}
\abs*{r^{(1)}-1}
=
\abs*{\norm{x^{(1)}}_2 - \norm{x^\star}_2}
\le
\norm{x^{(1)}-\xstar}_2
\le
\frac{1}{128}
\end{align*}
Finally, since $x^{(1)}=r^{(1)}u^{(1)}$,
\begin{align*}
%e^{(1)}
e_1
=
\norm{u^{(1)}-\xstar}_2
\le
\norm{u^{(1)}-x^{(1)}}_2+\norm{x^{(1)}-\xstar}_2
=
\abs{1-r^{(1)}}+\norm{x^{(1)}-\xstar}_2
\le
\frac{1}{64}
\end{align*}
Next, we prove the one-step contraction that yields the desired induction. 

Fix $t\ge1$ and assume $\xt\neq0$ and let $\rt = \norm{\xt}_2, \ut = \xt/\norm{\xt}_2$. Let $S_t=\supp(\xt)$ and $S_{t+1}=\supp(\xtp)$, and define $J_{t+1}=S_{t+1}\setminus(S^\star\cup S_t)$. Then, \cref{lem:support-restriction} together with \cref{lem:sign-norm-invariant} give
\begin{align*}
\xtp
\in
\cT_s\parens*{
\xt+h_{A;J_{t+1}}(\xstar,\ut)
}
\end{align*}
Since $\card{J_{t+1}}\le s$, \RAIC{} applied to the pair $(\xstar,\ut)$ and the set $J_{t+1}$ gives an error vector $\xi_t$ such that
\begin{align*}
h_{A;J_{t+1}}(\xstar,\ut)
=
\xstar-\ut+\xi_t
\end{align*}
and
\begin{align*}
\norm{\xi_t}_2 \le c_1\sqrt{\delta e_t}+c_2\delta
\end{align*}
Substituting $\xt=\rt\ut$ gives the decomposition
\begin{align*}
\xt+h_{A;J_{t+1}}(\xstar,\ut)
&=
\xt + x^\star - \ut + \xi_t\\
&=
\rt\xstar + \rt\ut -\rt x^\star + \xstar -\ut + \xi_t \\
&=
\rt\xstar+(\rt-1)(\ut-\xstar)+\xi_t
\end{align*}
This is where the moving reference enters the proof. The correction estimates $\xstar-\ut$, but it is added to $\rt\ut$. Comparing the result to $\rt\xstar$ instead of $\xstar$ turns the radial discrepancy into the direction-dependent product $(\rt-1)(\ut-\xstar)$.

Since $\xtp = \cT_s\parens *{\xt+h_{A;J_{t+1}}(\xstar,\ut)}$ and $\rt\xstar\in\Sigma_s^n$, \cref{lem:2-approx-projector} gives
\begin{align}
\norm{\xtp - \rt\xstar}_2&\le 2 \norm{\xt+h_{A;J_{t+1}}(\xstar,\ut) - \rt\xstar}_2\\
&\le 2 \parens *{\abs{\rt-1}e_t+c_1\sqrt{\delta e_t}+c_2\delta}\label{eq:noiseless-direction-one-step}
%\norm{\xtp-\rt\xstar}_2
\end{align}
%Whenever the right-hand side is $\le \rt/2$, \cref{lem:normalization-around-radial-target} gives
%\begin{align*}
%e_{t+1}
%\le
%\frac{8}{\rt} \parens *{\abs{\rt-1}e_t+b_1\sqrt{\delta e_t}+b_2\delta}
%\end{align*}
The same estimate also controls the radial drift, since 
\begin{align*}
\abs{\rtp-\rt} = \abs*{\norm{\xtp}_2-\norm{\rt\xstar}_2}
\le
\norm{\xtp-\rt\xstar}_2
%\le
%2\parens *{\abs{\rt-1}e_t+b_1\sqrt{\delta e_t}+b_2\delta}
\end{align*}
where the last inequality is due to the triangle inequality. Substituting the upper bound from \cref{eq:noiseless-direction-one-step} gives
\begin{align}\label{eq:noiseless-radial-one-step}
    \abs{\rtp-\rt} \le 2\parens *{\abs{\rt-1}e_t+c_1\sqrt{\delta e_t}+c_2\delta}
\end{align}

It remains to close the induction. Let
\begin{align*}
T
=
1+\ceil*{\log_2\parens*{\frac{1}{64\epsilon}}}
\end{align*}
as stated. We claim that for every $1\le t\le T$,
\begin{align*}
\xt\neq0,
\qquad
e_t\le\max\set{2^{-(t-1)}/64,\epsilon},
\qquad
\abs*{\rt-1}\le\frac{1}{64}
\end{align*}
The base case was proved above. Assume the claim holds at some time $t<T$. Since $\epsilon\le1/64$ and $e_t\le1/64$, we have $\max\set{e_t,\epsilon}\le1/64$. Moreover, $\abs{\rt-1}\le1/64$. If $e_t\ge\epsilon$, then $\delta=\epsilon/L\le e_t/L$, and
\begin{align*}
\abs{\rt-1}e_t+c_1\sqrt{\delta e_t}+c_2\delta
\le
\parens*{
\frac{1}{64}+\frac{c_1}{\sqrt L}+\frac{c_2}{L}
}e_t
\end{align*}
If instead $e_t\le\epsilon$, then similarly
\begin{align*}
\abs{\rt-1}e_t+c_1\sqrt{\delta e_t}+c_2\delta
\le
\parens*{
\frac{1}{64}+\frac{c_1}{\sqrt L}+\frac{c_2}{L}
}\epsilon
\end{align*}
Together, we have
\begin{align*}
\abs{\rt-1}e_t+c_1\sqrt{\delta e_t}+c_2\delta
\le
C\max\set{e_t,\epsilon}
\end{align*}
where $C =\nicefrac{1}{64}+c_1/\sqrt L+c_2/L$ is an absolute constant. Moreover, by our choice of $L$, $C\le(1-\nicefrac{1}{64})/16$.

Since $\rt\ge1-\nicefrac{1}{64}$ and $\max\set{e_t,\epsilon}\le1/64$,
\begin{align*}
2\parens *{\abs{\rt-1}e_t+c_1\sqrt{\delta e_t}+c_2\delta}
\le
2C\cdot\frac{1}{64}
\le
\frac{1-\nicefrac{1}{64}}{8}\cdot\frac{1}{64}
\le
\frac{1-\nicefrac{1}{64}}{2}
\le
\frac{\rt}{2}
\end{align*}
Therefore, the normalization estimate of \cref{lem:normalization-around-radial-target} applies, and
\begin{align*}
e_{t+1}
\le
\frac{8}{\rt}\parens *{\abs{\rt-1}e_t+c_1\sqrt{\delta e_t}+c_2\delta}
\le
\frac{8}{1-\nicefrac{1}{64}}C\max\set{e_t,\epsilon}
\le
\frac{1}{2}\max\set{e_t,\epsilon}
\end{align*}
From the induction hypothesis, we have $e_t\le\max\set{2^{-(t-1)}/64,\epsilon}$, which then gives the upper bound
\begin{align*}
e_{t+1}
\le
\frac{1}{2}\max\set{2^{-(t-1)}/64,\epsilon}
\le
\max\set{2^{-t}/64,\epsilon}
\end{align*}
This proves the directional part of the induction. For the radial drift, we simply substitute the derived upper bounds into the one-step estimate of \cref{eq:noiseless-radial-one-step}:
\begin{align*}
\abs{\rtp-\rt}
&\le
2\parens *{\abs{\rt-1}e_t+c_1\sqrt{\delta e_t}+c_2\delta}\\
&\le
2C\max\set{e_t,\epsilon}\\
&\le
\frac{1-\nicefrac{1}{64}}{8}\max\set{e_t,\epsilon}
\end{align*}
Summing over the radial increments from time $1$ to $t$ gives
\begin{align*}
\abs{\rtp-1}
&\le
\abs{r^{(1)}-1}
+
\sum_{j=1}^t\abs{r^{(j+1)}-r^{(j)}} \\
&\le
\frac{1}{128}
+
\frac{1-\nicefrac{1}{64}}{8}
\sum_{j=1}^t
\max\set{2^{-(j-1)}/64,\epsilon}
\end{align*}
Since $t<T$, it suffices to bound the sum up to $T-1$. We have
\begin{align*}
\sum_{j=1}^{T-1}\max\set{2^{-(j-1)}/64,\epsilon}
\le
\sum_{j=1}^{T-1}\frac{2^{-(j-1)}}{64}
+
(T-1)\epsilon
\end{align*}
The geometric sum is at most $1/32$. For the second term, set $q=1/(64\epsilon)\ge1$. Then $T-1=\ceil*{\log_2 q}\le1+\log_2 q$, and $1+\log_2 q\le2q$ for all $q\ge1$. Therefore, $(T-1)\epsilon\le1/32$, and 
\begin{align*}
\sum_{j=1}^{T-1}\max\set{2^{-(j-1)}/64,\epsilon}
\le
\frac{1}{16}
\end{align*}
Substituting this into the norm estimate yields
\begin{align*}
\abs{\rtp-1}
\le
\frac{1}{128}
+
\frac{1-\nicefrac{1}{64}}{128}
<
\frac{1}{64}
\end{align*}
This closes the induction. In particular, $x^{(1)},\ldots,x^{(T)}$ are all nonzero.

At the final time $T$, the directional envelope gives
\begin{align*}
e_T
\le
\max\set{2^{-(T-1)}/64,\epsilon}
\end{align*}
By definition of $T$, $2^{-(T-1)}/64\le\epsilon$, so $e_T\le\epsilon$. Since \cref{alg:biht} returns the final normalized estimate $\widehat{x}_{\mathrm{BIHT}}=x^{(T)}/\norm{x^{(T)}}_2$, this is exactly
\begin{align*}
\norm{\widehat{x}_{\mathrm{BIHT}}-\xstar}_2
\le
\epsilon
\end{align*}
and this completes the proof. 
\end{proof}
%\prateeti{TOD: change all sample complexities to have the $\sqrt{\log(2e/\delta)}$ terms.}
\subsection{Adversarial Corruptions}

\subsubsection{Lower bound}
The following simple one-dimensional construction shows that, in the presence of sign flips, a general last-iterate convergence theorem for \BIHT{}, as proposed in \textcite{Jacques2011Robust1C}, with constant step-size,  is impossible.
\begin{theorem}[Iterates flip signs infinitely often]\label{thm:adversarial-lower-bound}
Let $n=s=1$, $x^\star=1$, and $A\in\R^{m\times1}$ have entries $b_1,\ldots,a_m$ with $a_i\neq0$. Let $\emptyset\neq D\subsetneq[m]$ be a nonempty proper subset of indices, and define corrupted labels by
\begin{align*}
y_i=\begin{cases}
-\sign(a_i),& i\in D,\\
\sign(a_i),& i\notin D.
\end{cases}
\end{align*}
Set
\begin{align*}
\alpha\triangleq\frac{\eta}{m}\sum_{i\in D}\abs{a_i},
\qquad
\gamma\triangleq\frac{\eta}{m}\sum_{i\notin D}\abs{a_i}.
\end{align*}
Then, $\alpha>0$ and $\gamma>0$. Let $x^{(0)}\in\R\setminus\set{0}$ be
arbitrary. If the normalization-free \BIHT{} iterates never hit zero, then for
every $t\ge0$,
\begin{align*}
x^{(t+1)}=\begin{cases}
x^{(t)}-\alpha,& \text{ if } x^{(t)}>0,\\
x^{(t)}+\gamma,& \text{ if }x^{(t)}<0.
\end{cases}
\end{align*}
Consequently, $x^{(t)}$ changes sign infinitely often, and
\begin{align*}
\limsup_{t\to\infty}
\norm*{
\frac{x^{(t)}}{\abs{x^{(t)}}}
-
x^\star
}_2
=
2.
\end{align*}
If the entries of $A$ are independent standard Gaussians, $D$ is fixed, and $x^{(0)}\neq0$ is fixed independently of $A$, then $\P *{\exists t \ge 0 : \xt = 0 } = 0$. 
\end{theorem}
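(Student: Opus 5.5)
We first compute the scalar update directly. For $n=s=1$ hard thresholding is the identity, so $x^{(t+1)}=x^{(t)}+\frac{\eta}{2m}\sum_i a_i\parens*{y_i-\sign(a_i x^{(t)})}$. Suppose $x^{(t)}>0$. Then $\sign(a_ix^{(t)})=\sign(a_i)$, so the summand vanishes for $i\notin D$. For $i\in D$ it equals $a_i(-2\sign(a_i))=-2\abs{a_i}$. This gives $x^{(t+1)}=x^{(t)}-\alpha$. Suppose instead $x^{(t)}<0$. Then $\sign(a_ix^{(t)})=-\sign(a_i)$, since $a_i\neq0$ and the product is strictly negative. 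The summand vanishes for $i\in D$ and equals $2\abs{a_i}$ for $i\notin D$, so $x^{(t+1)}=x^{(t)}+\gamma$. Positivity of $\alpha,\gamma$ follows because $D$ and $[m]\setminus D$ are both nonempty and every $a_i\neq0$.

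Next we show infinitely many sign changes, assuming the iterates never hit zero. On a maximal run of positive iterates, $x^{(t)}$ decreases by the fixed amount $\alpha>0$ each step. Such a run therefore lasts at most $\lceil x^{(t_0)}/\alpha\rceil$ steps before an iterate becomes $\le 0$, hence $<0$ by assumption. Symmetrically, a negative run increases by $\gamma$ and ends in finitely many steps. So the sign changes infinitely often. At every negative iterate, $x^{(t)}/\abs{x^{(t)}}=-1$ and the error is $\abs{-1-1}=2$. The error is always at most $2$, so the limsup equals $2$.

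Finally we prove that hitting zero has probability zero. Every iterate has the form $x^{(t)}=x^{(0)}-k\alpha+\ell\gamma$ for nonnegative integers $k,\ell$ with $k+\ell=t$, as long as no earlier iterate was zero. Moreover, $x^{(t)}=0$ with $t$ minimal requires $x^{(0)}=k\alpha-\ell\gamma$ for some such pair. There are countably many pairs $(k,\ell)$, so it suffices to show that for each fixed pair $\P*{k\alpha-\ell\gamma=x^{(0)}}=0$. Here $k\alpha-\ell\gamma=\frac{\eta}{m}\parens*{k\sum_{i\in D}\abs{a_i}-\ell\sum_{i\notin D}\abs{a_i}}$. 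If $(k,\ell)\neq(0,0)$, this is a nonconstant function of independent absolutely continuous variables. Conditioning on all $a_i$ except one index in $D$ (if $k>0$) or outside $D$ (if $\ell>0$) leaves a variable of the form $c\abs{a_j}+\text{const}$ with $c\neq0$, which has a continuous distribution. So it equals the fixed value $x^{(0)}$ with probability zero. The pair $(0,0)$ is excluded because $x^{(0)}\neq0$, and $x^{(0)}$ is independent of $A$, so a countable union bound finishes the argument. The main care needed is this step: making sure $x^{(0)}$ is independent of $A$ and that every reachable iterate is captured by the lattice expression. Both are immediate from the recursion.
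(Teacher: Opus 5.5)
Your proof is correct and follows the paper's argument: the same computation of the two constant pushes $-\alpha$ and $+\gamma$, the same monotone-run argument for infinitely many sign changes, and the same reduction of the zero-hitting event to the events $\{x^{(0)}=k\alpha-\ell\gamma\}$ with $(k,\ell)\neq(0,0)$, each of which is null by non-atomicity. The differences are cosmetic. You take a direct countable union bound where the paper argues by contradiction at a fixed time $t$ with finitely many pairs. You also make the non-atomicity explicit by conditioning on all but one $\abs{a_j}$, and you note that the error never exceeds $2$; both are slightly more careful than the paper's corresponding one-line claims.
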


\begin{proof}
Since $n=s=1$, hard thresholding is the identity. Fix any time $t$ such that $x^{(t)}\neq0$. If $x^{(t)}>0$, then $\sign(a_ix^{(t)})=\sign(a_i)$ for every $i$, and
\begin{align*}
\frac{\eta}{2m}\sum_{i=1}^m a_i\parens*{y_i-\sign(a_ix^{(t)})}
&=
\frac{\eta}{2m}\sum_{i\in D}
a_i\parens*{-\sign(a_i)-\sign(a_i)}\\
&=
-\frac{\eta}{m}\sum_{i\in D}\abs{a_i}\\
&=
-\alpha.
\end{align*}
Thus $x^{(t+1)}=x^{(t)}-\alpha$ whenever $x^{(t)}>0$.

If $x^{(t)}<0$, then $\sign(a_ix^{(t)})=-\sign(a_i)$ for every $i$, and
\begin{align*}
\frac{\eta}{2m}\sum_{i=1}^m a_i\parens*{y_i-\sign(a_ix^{(t)})}
&=
\frac{\eta}{2m}\sum_{i\notin D}
a_i\parens*{\sign(a_i)+\sign(a_i)}\\
&=
\frac{\eta}{m}\sum_{i\notin D}\abs{a_i}\\
&=
\gamma.
\end{align*}
Thus $x^{(t+1)}=x^{(t)}+\gamma$ whenever $x^{(t)}<0$.

Because $\alpha,\gamma>0$, a positive iterate is decreased by $\alpha$ at every positive step until it becomes negative, unless it hits zero exactly. Similarly, a negative iterate is increased by $\gamma$ at every negative step until it becomes positive, unless it hits zero exactly. Under the hypothesis that the iterates never hit zero, the sign therefore changes infinitely often.

Whenever $x^{(t)}<0$, the normalized output is
$x^{(t)}/\abs{x^{(t)}}=-1$. Since $x^\star=1$, we have
\begin{align*}
\norm*{
\frac{x^{(t)}}{\abs{x^{(t)}}}
-
x^\star
}_2
=
2
\end{align*}
for infinitely many $t$, proving the limsup claim. It remains to verify the probability-zero claim. Suppose, for contradiction, $\P *{\exists t \ge 0 : \xt = 0 } > 0$. Then, there exists a fixed $t\ge1$ such that
\begin{align*}
	\P{x^{(0)},\ldots,x^{(t-1)}\neq0,\ x^{(t)}=0}>0
\end{align*}
On this event, the recursion holds for the first $t$ updates, and there must exist nonnegative integers $p,q$ with $p+q=t$ such that
\begin{align*}
	x^{(0)}-p\alpha+q\gamma=0
\end{align*}
Since there are only finitely many such pairs $(p,q)$, positive probability of the preceding event implies that, for a fixed pair $(p,q)$ with $p+q=t$,
\begin{align*}
	\P*{x^{(0)}-p\alpha+q\gamma=0}>0
\end{align*}
But $p$ and $q$ cannot both be zero, since $p+q=t\ge1$. Moreover, since $D$ and $\bracks *{m} \setminus D$ are both non-empty, $\alpha$ and $\gamma$ are independent non-atomic random variables, and so is the nontrivial affine combination $x^{(0)}-p\alpha+q\gamma$. Therefore,  $\P*{x^{(0)}-p\alpha+q\gamma=0}=0$ and we arrive at a contradiction.
\end{proof}

\paragraph{Per-iterate normalization escapes this instance}\label{rem:normalization-escapes-lb-instance}
Consider the same one-dimensional instance and let $\ut\in\set*{\pm 1}$ denote the normalized BIHT iterate. Since $n=s=1$, thresholding is the identity. Before normalization, the next iterate is of the form
\begin{align*}
    \tilde u^{(t+1)} = \ut +\frac{\eta}{2m}\sum_{i=1}^m a_i\parens*{y_i-\sign(a_i u^{(t)})}
\end{align*}
With $\alpha, \gamma$ as defined in \cref{thm:adversarial-lower-bound}, 
\begin{align*}
\tilde u^{(t+1)} 
=
\begin{cases}
1-\alpha, & \text{ if }u^{(t)}=1,\\
-1+\gamma, & \text{ if }u^{(t)}=-1.
\end{cases}
\end{align*}
Therefore, whenever the pre-normalized scalar is non-zero, 
\begin{align*}
\utp
= \frac{\tilde{u}^{(t+1)}}{\abs*{\tilde{u}^{(t+1)}}} = 
\begin{cases}
\frac{1-\alpha}{\abs*{1-\alpha}}, & \text{ if }u^{(t)}=1,\\
\frac{-1+\gamma}{\abs*{-1+\gamma}}, & \text{ if }u^{(t)}=-1.
\end{cases}
\end{align*}
On the event $\alpha<1<\gamma$, both possible current signs are mapped to the correct sign after normalization. In fact, this event is the finite sample analog of having fewer flipped signs than clean signs. Indeed, since $\eta\E{\abs{a_i}}=\sqrt{2\pi}\cdot\sqrt{2/\pi}=2$, for fixed $D$ we have
\begin{align*}
\E{\alpha}=2\frac{\abs{D}}{m},
\qquad
\E{\gamma}=2\parens*{1-\frac{\abs{D}}{m}}.
\end{align*}
Moreover, $\abs{a_i}-\E{\abs{a_i}}$ is sub-Gaussian. If
$\abs{D}\le \tau m$ for a fixed constant $\tau<1/2$, then sub-Gaussian concentration gives
\begin{align*}
\P{\alpha<1<\gamma}
\ge
1-2\exp\parens*{-c(1-2\tau)^2m}
\end{align*}
for absolute constant $c>0$. 
%\end{remark}

While both algorithms receive identical pushes $-\alpha$ and $+\gamma$ to force a sign switch, normalization resets the iterate to unit magnitude after every update. From the correct sign, the corrupted measurements can overturn the normalized iterate only if $\alpha>1$. In the regime discussed above, this failure event has exponentially small probability for Gaussian sensing matrices. Without normalization, there is no such reset -- a positive unnormalized iterate repeatedly loses $\alpha$, which eventually carries it across the origin over a long enough time frame. If only one measurement is flipped, $\alpha$ is of order $1/m$. An order-one positive unnormalized iterate crosses the origin after order $m$ iterations, while the normalized iterate is correct from the first iterate onward (on the event $\alpha<1<\gamma$).

\subsubsection{Early iterate behavior before oscillations}
The scalar lower bound rules out uniform last-iterate convergence under corruptions, but it does not rule out accurate early iterates. The next result gives the corresponding positive statement: the unnormalized trajectory reaches the same robust error floor as normalized \BIHT{} at an early, certifiable iterate.
\begin{theorem}[Early accurate iterate under adversarial sign flips]
\label{thm:adversarial-positive}
Let $1\le s\le n$, $0<\epsilon\le 1/64$, $0<\rho<1$, and $0\le \tau\le 1$. Let
$A\in\R^{m\times n}$ have independent $\cN(0,I_n)$ rows. Suppose
\begin{align*}
	  m \gtrsim \frac{s}{\epsilon}\log\parens*{\frac{en}{s}}
         \sqrt{\log\parens*{\frac{2e}{\epsilon}}}
       + \frac{s}{\epsilon}\log^{3/2}\parens*{\frac{2e}{\epsilon}}
       + \frac{1}{\epsilon}\log\parens*{\frac{1}{\rho}}
         \sqrt{\log\parens*{\frac{2e}{\epsilon}}}
\end{align*}
Then, with probability at least $1-\rho$ over the draw of $A$, the following holds simultaneously for every $\xstar\in\Sigma_s^n\cap S^{n-1}$, every corrupted observation vector $y\in\set{\pm1}^m$ satisfying $d_{\mathrm H}(y,\sign(A\xstar))\le \tau m$, every initialization $x^{(0)}\in\Sigma_s^n\cap S^{n-1}$, and every sequence of hard-thresholding choices. 

Consider the normalization-free BIHT iterates run with measurements $y$. There exist absolute constants $c_1, c_2>0$ such that, with the convention $\tau\sqrt{\log(2e/\tau)}=0$ when $\tau=0$, if
\begin{align*}
	\gamma(\epsilon,\tau)
	\triangleq
    \epsilon 
    + 
    c_1\sqrt{\epsilon\tau}
    +
    c_2 \tau\sqrt{\log\parens*{\frac{2e}{\tau}}}
	%C\parens*{
	%	\epsilon
	%	+
	%	\sqrt{\epsilon\tau}
	%	+
	%	\tau\sqrt{\log\parens*{\frac{2e}{\tau}}}
	%}
	\le
	\frac{1}{64},
\end{align*}
then the first hitting time for normalization-free BIHT
\begin{align*}
	t_{\mathrm{hit}}
	\triangleq
	\inf\set*{
		t\ge1
		\given
		x^{(t)}\neq0
		\text{ and }
		\norm*{
			\frac{x^{(t)}}{\norm{x^{(t)}}_2}
			-
			\xstar
		}_2
		\le
		\gamma(\epsilon,\tau)
	}
\end{align*}
satisfies
\begin{align*}
	t_{\mathrm{hit}}
	\le
	T
	\triangleq
	1+\ceil*{\log_2\parens*{\frac{1}{64\gamma(\epsilon,\tau)}}}.
\end{align*}
Moreover, for every $1\le t\le t_{\mathrm{hit}}$, the iterate $x^{(t)}$ is nonzero and satisfies $\abs*{\norm{x^{(t)}}_2-1}\le\frac{1}{64}$.  %Equivalently,
%\begin{align*}
%	\min_{1\le t\le T}
%	\norm*{
%		\frac{x^{(t)}}{\norm{x^{(t)}}_2}
%		-
%		\xstar
%	}_2
%	\le
%	\gamma(\epsilon,\tau).
%\end{align*}
\end{theorem}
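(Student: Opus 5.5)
We follow the skeleton of the proof of \cref{thm:noiseless}, replacing \cref{fact:gaussian-matrices-satisfy-raic} by \cref{fact:gaussian-matrices-satisfy-adv-raic}. Set $\delta=\epsilon/L$ for an absolute constant $L$ chosen below. The stated bound on $m$ then suffices to invoke \cref{fact:gaussian-matrices-satisfy-adv-raic}, and we condition on its event; everything afterwards is deterministic and uniform.

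Fix $\xstar$, the corrupted labels $y$, the initialization, and the tie-breaks. Define $f\in\cF_A(\tau)$ by $f(z)=y$ when $z$ is a positive multiple of $\xstar$, and $f(z)=\sign(Az)$ otherwise. Then $h_{f;A}(\xstar,q)$ equals the actual BIHT correction at $q$. Since the sign term depends only on direction, \cref{lem:sign-norm-invariant} applies, so every update can be written with the unit pair $(\xstar,\ut)$. Write $\beta\triangleq b_3\sqrt{\delta\tau}+b_4\tau\sqrt{\log(2e/\tau)}$. The constants $c_1,c_2$ in $\gamma$ are chosen so that
\[
b_2\delta+\beta\le \kappa\,\gamma(\epsilon,\tau)
\]
for a small absolute constant $\kappa$. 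This works because $b_3\sqrt{\delta\tau}=(b_3/\sqrt L)\sqrt{\epsilon\tau}$. Concretely, we take $L$ large first and then set $c_1,c_2$ to large multiples of $b_3/\sqrt{L}$ and $b_4$.

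\emph{Base step.} As in \cref{thm:noiseless}, restrict to $U_1=S^\star\cup S_0\cup J_1$ via \cref{lem:support-restriction}. Adversarial RAIC then gives $x^{(1)}\in\cT_s(\xstar+\xi_0)$ with
\[
\norm{\xi_0}_2\le b_1\sqrt{2\delta}+b_2\delta+\beta .
\]
Since $\gamma\le 1/64$, this is $\lesssim 1/\sqrt L+\kappa/64$, hence at most $1/128$ for suitable constants. As before, this yields $x^{(1)}\neq0$, $\abs{r^{(1)}-1}\le 1/128$, and $e_1\le1/64$.

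\emph{One-step estimate.} The moving-reference decomposition is unchanged:
\[
\xt+h_{f;A;J_{t+1}}(\xstar,\ut)=\rt\xstar+(\rt-1)(\ut-\xstar)+\xi_t,
\qquad
\norm{\xi_t}_2\le b_1\sqrt{\delta e_t}+b_2\delta+\beta .
\]
By \cref{lem:2-approx-projector},
\[
\norm{\xtp-\rt\xstar}_2\le 2\parens*{\abs{\rt-1}e_t+\norm{\xi_t}_2}.
\]
This bounds $\abs{\rtp-\rt}$, and, via \cref{lem:normalization-around-radial-target}, it gives $e_{t+1}\le (8/\rt)(\cdots)$.

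\emph{Induction.} The difference from the noiseless proof is that we only run the induction while $t<t_{\mathrm{hit}}$, i.e.\ while $e_t>\gamma$. In that regime $\delta\le\gamma/L<e_t/L$ and $\beta\le\kappa\gamma<\kappa e_t$, so
\[
\abs{\rt-1}e_t+\norm{\xi_t}_2\le C e_t,
\qquad
C=\tfrac1{64}+\tfrac{b_1}{\sqrt L}+\tfrac{b_2}{L}+\kappa\le\tfrac{1-1/64}{16}.
\]
This gives $e_{t+1}\le e_t/2$ and $\abs{\rtp-\rt}\le\frac{1-1/64}{8}e_t$. The claim to prove, for $1\le t\le\min(t_{\mathrm{hit}},T)$, is
\[
\xt\neq0,\qquad e_t\le 2^{-(t-1)}/64,\qquad \abs{\rt-1}\le 1/64 .
\]
The radial sum is now a pure geometric series: every step before $t_{\mathrm{hit}}$ has $e_j\le 2^{-(j-1)}/64$, so the total drift is at most $\frac{1}{128}+\frac{1-1/64}{128}<\frac1{64}$. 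Note that we do not even need the $(T-1)\epsilon$ term that appeared in the noiseless proof.

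\emph{Conclusion.} If $t_{\mathrm{hit}}>T$, the induction reaches $t=T$ and gives $e_T\le 2^{-(T-1)}/64\le\gamma$, which contradicts the definition of $t_{\mathrm{hit}}$. Hence $t_{\mathrm{hit}}\le T$. The norm and nonzero claims hold up to and including $t_{\mathrm{hit}}$, because the step into $t_{\mathrm{hit}}$ is taken from a time with $e_t>\gamma$.

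The main obstacle is justifying adversarial RAIC for the actual trajectory. The fact is stated for a corruption map $f$ with $d_H(f(z),\sign(Az))\le\tau m$ for \emph{all} $z$, and we apply it only at $p=\xstar$. The construction of $f$ above handles this: it corrupts only along the ray of $\xstar$, so it lies in $\cF_A(\tau)$, and the bound is uniform over $f$.
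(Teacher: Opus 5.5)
Your proposal is correct and follows essentially the same route as the paper: the same auxiliary corruption map $f$, the same moving-reference decomposition around $\rt\xstar$, and the same induction run only while $e_t>\gamma(\epsilon,\tau)$, with a purely geometric radial sum and a contradiction at time $T$. Your extension of $f$ along the whole ray of $\xstar$ is harmless, since $f$ is only ever evaluated at $p=\xstar$. The only other difference is bookkeeping of constants: the paper sets $c_2=c_1^2$ and $\delta=\epsilon/(c_2L)$ so each term of $\gamma$ compares to $e_t/c_2$, whereas you take $\delta=\epsilon/L$ and tune $c_1,c_2$ separately to get $\beta\le\kappa\gamma$, which works equally well; and the hypothesis $\norm{\xtp-\rt\xstar}_2\le\rt/2$ of \cref{lem:normalization-around-radial-target} that you left implicit follows at once from $2Ce_t\le 2C/64$.
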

\begin{proof}
%Let $b_1 = \sqrt{\frac{3\pi}{380}}\parens*{1+\frac{16\sqrt2}{3}}, b_2 = \frac{3}{380}\parens*{1+\frac{4\pi}{3}+\frac{8\sqrt{3\pi}}{3}+8\sqrt{6\pi}}, b_3 = \frac{(12+\sqrt3)\sqrt{\pi}}{\sqrt 380}, b_4 = 2+4\sqrt{\pi}$ be the universal constants in \cref{fact:gaussian-matrices-satisfy-adv-raic}. 
Let $b_1, b_2, b_3, b_4$ be the universal constants in \cref{fact:gaussian-matrices-satisfy-adv-raic}. Choose absolute constants $c_2 = c_1^2$ and $L$ such that

%$C = 256, L = 128$, so that 
%Choose absolute constants $L\ge1$ and $C\ge1$, depending only on $b_1,b_2,b_3,b_4$, such that
%\begin{align}\label{eq:adv-le-128}
%2b_1\sqrt{\frac{2}{64CL}}
%+
%2b_2\frac{1}{64CL}
%+
%2b_3\frac{1}{64C\sqrt L}
%+
%2b_4\frac{1}{64C}
%&\le
%\frac{1}{128},
%\end{align}
\begin{align}
2b_1\sqrt{\frac{2}{64c_2L}}
+
2b_2\frac{1}{64c_2L}
+
2b_3\frac{1}{64c_2\sqrt L}
+
2b_4\frac{1}{64c_2}
&\le
\frac{1}{128}
\label{eq:adv-base-constant-requirement}
\end{align}
and
\begin{align}
\frac{1}{64}
+
\frac{b_1}{\sqrt{c_2L}}
+
\frac{b_2}{c_2L}
+
\frac{b_3}{c_2\sqrt L}
+
\frac{b_4}{c_2}
&\le
\frac{1-\nicefrac{1}{64}}{16}.
\label{eq:adv-contraction-constant-requirement}
\end{align}
%\begin{align}\label{eq:adv-le-64-16}
%\frac{1}{64}
%+
%\frac{b_1}{\sqrt{CL}}
%+
%\frac{b_2}{CL}
%+
%\frac{b_3}{C\sqrt L}
%+
%\frac{b_4}{C}
%&\le
%\frac{1-\nicefrac{1}{64}}{16}.
%\end{align}
Set $\delta=\nicefrac{\epsilon}{c_2L}$. Since $L$ is an absolute constant, the stated lower bound on $m$, after adjusting the hidden absolute constant, is sufficient to invoke \cref{fact:gaussian-matrices-satisfy-adv-raic} at scale $\delta$. We condition on the resulting high-probability event. The remainder of the proof is deterministic and holds uniformly over the target vector, the corrupted measurements, the sparse-unit initialization, and the hard-thresholding choices.

Fix arbitrary $\xstar\in\Sigma_s^n\cap S^{n-1}$, arbitrary $y\in\set{\pm1}^m$ satisfying $d_{\mathrm H}(y,\sign(A\xstar))\le\tau m$, arbitrary $x^{(0)}\in\Sigma_s^n\cap S^{n-1}$, and arbitrary hard-thresholding choices. Define $f:\R^n\to\set{\pm1}^m$ by
\begin{align*}
f(z)
=
\begin{cases}
y, & z=\xstar,\\
\sign(Az), & z\neq\xstar.
\end{cases}
\end{align*}
Then $f\in\cF_A(\tau)$, and the \BIHT{} update with measurements $y$ can be written as
\begin{align*}
x^{(t+1)}
\in
\cT_s\parens*{
x^{(t)}
+
h_{f;A}(\xstar,x^{(t)})
}.
\end{align*}

Whenever $\xt\neq0$, define $\rt\triangleq\norm{\xt}_2$, $\ut\triangleq\xt/\norm{\xt}_2$, and $e_t\triangleq\norm{\ut-\xstar}_2$.

We first prove that the first iterate enters the same constant-sized basin as in the noiseless proof. Let $S^\star\triangleq\supp(\xstar)$, $S_0\triangleq\supp(x^{(0)})$, and $S_1\triangleq\supp(x^{(1)})$. Define $J_1\triangleq S_1\setminus(S^\star\cup S_0)$ and $U_1\triangleq S^\star\cup S_0\cup J_1$. Since $S_1$ has size at most $s$, $\card{J_1}\le s$, and by construction $S_1\subseteq U_1$. By \cref{lem:support-restriction},
\begin{align*}
x^{(1)}
\in
\cT_s\parens*{
\cT_{U_1}
\parens*{
x^{(0)}+h_{f;A}(\xstar,x^{(0)})
}
}.
\end{align*}
Since $x^{(0)}$ is supported on $S_0\subseteq U_1$, this is equivalently
\begin{align*}
x^{(1)}
\in
\cT_s\parens*{
x^{(0)}+h_{f;A;J_1}(\xstar,x^{(0)})
}.
\end{align*}
From \cref{fact:gaussian-matrices-satisfy-adv-raic} applied to the pair $(\xstar,x^{(0)})$, the function $f$, and the set $J_1$, there exists an error vector $\xi_0$ such that
\begin{align*}
h_{f;A;J_1}(\xstar,x^{(0)})
=
\xstar-x^{(0)}+\xi_0
\end{align*}
and
\begin{align*}
\norm *{\xi_0}_2
\le
b_1\sqrt{\delta\norm{\xstar-x^{(0)}}_2}
+
b_2\delta
+
b_3\sqrt{\delta\tau}
+
b_4\tau\sqrt {\log\parens *{\frac{2e}{\tau}}}.
\end{align*}
The restricted dense vector is $\xstar+\xi_0$. Since $\xstar\in\Sigma_s^n$, \cref{lem:2-approx-projector} gives
\begin{align*}
\norm{x^{(1)}-\xstar}_2
\le
2\norm{\xi_0}_2.
\end{align*}
Since both $\xstar$ and $x^{(0)}$ are unit vectors, $\norm{\xstar-x^{(0)}}_2\le 2$. Moreover, by our choice of constants, $c_1 ^2 = c_2$, and 
\begin{align*}
\gamma(\epsilon, \tau) = c_2\parens*{
\frac{\epsilon}{c_2}
+
\sqrt{\frac{\epsilon}{c_2}\tau}
+
\tau\sqrt{\log \parens*{\frac{2e}{\tau}}}
} \le \frac{1}{64}
\end{align*}
Therefore, each non-negative term inside the parenthesis on the left hand-side is at most $1/(64c_2)$. Since $\delta = \nicefrac{\epsilon}{c_2L}$, we have 
%\begin{align*}
%    \epsilon \le \epsilon + \sqrt{\epsilon\tau} + \tau\sqrt{\log \parens *{\frac{2e}{\tau}}} \le \frac{1}{64 C}
%\end{align*}
%Therefore, with $\delta = \epsilon / L$, we have
\begin{align*}
\norm{x^{(1)}-\xstar}_2
&\le
2b_1\sqrt{2\delta}
+
2b_2\delta
+
2b_3\sqrt{\delta\tau}
+
2b_4\tau\sqrt{\log\parens *{\frac{2e}{\tau}}} \\
&\le
2b_1\sqrt{\frac{2}{64c_2L}}
+
2b_2\frac{1}{64c_2L}
+
2b_3\frac{1}{64c_2\sqrt L}
+
2b_4\frac{1}{64c_2}
\le
\frac{1}{128},
\end{align*}
where the last inequality is due to \cref{eq:adv-base-constant-requirement}. Therefore, $x^{(1)}\neq0$, and via triangle inequality, 
\begin{align*}
\abs*{r^{(1)}-1} = \abs *{\norm*{x_1}_2 - \norm*{x^\star}_2}
\le
\norm{x^{(1)}-\xstar}_2
\le
\frac{1}{128}.
\end{align*}
Since $x^{(1)}=r^{(1)}u^{(1)}$, we also have
\begin{align*}
e_1
=
\norm{u^{(1)}-\xstar}_2
\le
\abs*{1-r^{(1)}}+\norm{x^{(1)}-\xstar}_2
\le
\frac{1}{64}.
\end{align*}
If $e_1\le\gamma(\epsilon,\tau)$, then $t_{\mathrm{hit}}=1$, and the theorem follows. We therefore assume below that $e_1>\gamma(\epsilon,\tau)$.
Next, we will prove the one-step estimate that holds until hitting time. Fix $t\ge1$ and assume $\xt\neq0$. Let $S_t\triangleq\supp(\xt)$ and $S_{t+1}\triangleq\supp(\xtp)$, and define $J_{t+1}\triangleq S_{t+1}\setminus(S^\star\cup S_t)$. Since positive rescaling of the second argument does not change the sign correction, \cref{lem:support-restriction}, together with \cref{lem:sign-norm-invariant}, gives
\begin{align*}
\xtp
\in
\cT_s\parens*{
\xt+h_{f;A;J_{t+1}}(\xstar,\ut)
}.
\end{align*}
Applying adversarial \RAIC{} to the pair $(\xstar,\ut)$, the function $f$, and the set $J_{t+1}$, there exists an error vector $\xi_t$ such that
\begin{align*}
h_{f;A;J_{t+1}}(\xstar,\ut)
=
\xstar-\ut+\xi_t
\end{align*}
and
\begin{align*}
\norm{\xi_t}_2
\le
b_1\sqrt{\delta e_t}
+
b_2\delta
+
b_3\sqrt{\delta\tau}
+
b_4\tau\sqrt{\log \parens *{\frac{2e}{\tau}}}.
\end{align*}
Substituting $\xt=\rt\ut$ gives the same moving-reference decomposition as in the noiseless proof:
\begin{align*}
\xt+h_{f;A;J_{t+1}}(\xstar,\ut)
=
\rt\xstar+(\rt-1)(\ut-\xstar)+\xi_t.
\end{align*}
Since $\xtp = \cT_s\parens *{\xt + h_{f;A;J_{t+1}}(\xstar,\ut)}$ and $\rt\xstar \in \Sigma_s^n$, \cref{lem:2-approx-projector} gives
\begin{align*}
\norm{\xtp-\rt\xstar}_2
&\le 2 \norm *{\xt + h_{f;A;J_{t+1}}(x^\star, \ut)- \rt x^\star}_2\\
&\le
2 \parens *{\abs*{\rt-1}e_t
+
b_1\sqrt{\delta e_t}
+
b_2\delta
+
b_3\sqrt{\delta\tau}
+
b_4\tau\sqrt{\log \parens*{\frac{2e}{\tau}}}}.
\end{align*}
If $t<t_{\mathrm{hit}}$, then $e_t>\gamma(\epsilon,\tau)$. Since
\begin{align*}
\gamma(\epsilon,\tau)
=
c_2\parens*{\frac{\epsilon}{c_2}+\sqrt{\frac{\epsilon\tau}{c_2}}+\tau\sqrt{\log \parens *{\frac{2e}{\tau}}}}
\end{align*}
and all three terms inside the parentheses are non-negative, we have
\begin{align*}
\epsilon< e_t,
\qquad
\sqrt{\frac{\epsilon\tau}{c_2}}<\frac{e_t}{c_2},
\qquad
\tau\sqrt{\log\parens *{\frac{2e}{\tau}}}<\frac{e_t}{c_2}.
\end{align*}
Using $\delta=\epsilon/c_2L$, we get
\begin{align*}
\sqrt{\delta e_t}
\le
\frac{e_t}{\sqrt{c_2L}},
\qquad
\delta
\le
\frac{e_t}{c_2L},
\qquad
\sqrt{\delta\tau}
\le
\frac{e_t}{c_2\sqrt L}.
\end{align*}
Therefore, whenever $t<t_{\mathrm{hit}}$ and $\abs*{r^{(t)}-1}\le1/64$,

\begin{align}\label{eq:dt-le-bt}
\abs*{r^{(t)}-1}e_t
+
b_1\sqrt{\delta e_t}
+
b_2\delta
+
b_3\sqrt{\delta\tau}
+
b_4\tau\sqrt{\log\parens *{\frac{2e}{\tau}}} 
\le
\parens *{\frac{1}{64}
+
\frac{b_1}{\sqrt{c_2L}}
+
\frac{b_2}{c_2L}
+
\frac{b_3}{c_2\sqrt L}
+
\frac{b_4}{c_2}} e_t.
\end{align}
By \cref{eq:adv-contraction-constant-requirement}, the right hand side is bounded above by $\parens *{1-\nicefrac{1}{64}}e_t/16 =63 e_t/1024 $.

We claim that, for every $1\le t\le\min\set{T,t_{\mathrm{hit}}}$,
\begin{align*}
x^{(t)}\neq0,
\qquad
e_t\le\frac{2^{-(t-1)}}{64},
\qquad
\abs*{r^{(t)}-1}\le\frac{1}{64}.
\end{align*}
The base case was proved above. Assume the claim holds at time $t<\min\set{T,t_{\mathrm{hit}}}$. Then $t<t_{\mathrm{hit}}$, so \cref{eq:dt-le-bt} holds. Since $e_t\le1/64$ and $\rt\ge63/64$,
\begin{align*}
2\parens *{\abs*{r^{(t)}-1}e_t
+
b_1\sqrt{\delta e_t}
+
b_2\delta
+
b_3\sqrt{\delta\tau}
+
b_4\tau\sqrt{\log\parens *{\frac{2e}{\tau}}} }
&\le
2\parens*{\frac{1}{64}
+
\frac{b_1}{\sqrt{c_2L}}
+
\frac{b_2}{c_2L}
+
\frac{b_3}{c_2\sqrt L}
+
\frac{b_4}{c_2}} \, e_t\\
&\le
2\cdot \frac{63}{1024}\cdot \frac{1}{64}\\
&<
\frac{63}{128}
\le
\frac{\rt}{2}.
\end{align*}

Then, \cref{lem:normalization-around-radial-target} applies around the radial target $\rt\xstar$, and gives

\begin{align*}
e_{t+1}
&\le
\frac{8}{r^{(t)}} \parens*{ \abs*{r^{(t)}-1}e_t
+
b_1\sqrt{\delta e_t}
+
b_2\delta
+
b_3\sqrt{\delta\tau}
+
b_4\tau\sqrt{\log\parens *{\frac{2e}{\tau}}}}\\
&\le
\frac{8}{63/64}\parens *{\frac{1}{64}
+
\frac{b_1}{\sqrt{c_2L}}
+
\frac{b_2}{c_2L}
+
\frac{b_3}{c_2\sqrt L}
+
\frac{b_4}{c_2}} e_t\\
&=
\frac{512}{63}\parens *{\frac{1-\nicefrac{1}{64}}{16}} e_t\\
&\le
\frac{1}{2}e_t.
\end{align*}

The same distance estimate also controls the radial drift. 
\begin{align*}
\abs*{r^{(t+1)}-r^{(t)}}
&=
\abs*{\norm{x^{(t+1)}}_2-\norm{r^{(t)}\xstar}_2}\\
&\le
\norm{x^{(t+1)}-r^{(t)}\xstar}_2\\
&\le
2\parens *{\abs*{r^{(t)}-1}e_t
+
b_1\sqrt{\delta e_t}
+
b_2\delta
+
b_3\sqrt{\delta\tau}
+
b_4\tau\sqrt{\log\parens *{\frac{2e}{\tau}}} }\\
&\le
2\parens*{\frac{1}{64}
+
\frac{b_1}{\sqrt{c_2L}}
+
\frac{b_2}{c_2L}
+
\frac{b_3}{c_2\sqrt L}
+
\frac{b_4}{c_2}}\, e_t\\
&\le
\frac{63}{512}e_t.
\end{align*}
Summing the radial increments gives
\begin{align*}
\abs*{r^{(t+1)}-1}
&\le
\abs*{r^{(1)}-1}
+
\sum_{j=1}^{t}
\abs*{r^{(j+1)}-r^{(j)}}\\
&\le
\frac{1}{128}
+
\frac{63}{512}
\sum_{j=1}^{t}e_j\\
&\le
\frac{1}{128}
+
\frac{63}{512}
\sum_{j=1}^{\infty}\frac{2^{-(j-1)}}{64}\\
&=
\frac{1}{128}
+
\frac{63}{512}\cdot\frac{1}{32}
<
\frac{1}{64}.
\end{align*}
This closes the induction. It remains to bound the hitting time. Suppose, for contradiction, that $t_{\mathrm{hit}}>T$. Then the induction applies through time $T$, and
\begin{align*}
e_T
\le
\frac{2^{-(T-1)}}{64}.
\end{align*}
By definition of $T$,
\begin{align*}
\frac{2^{-(T-1)}}{64}
\le
\gamma(\epsilon,\tau).
\end{align*}
Thus $e_T\le\gamma(\epsilon,\tau)$. Since the induction also gives $x^{(T)}\neq0$, time $T$ satisfies the defining condition for $t_{\mathrm{hit}}$, contradicting $t_{\mathrm{hit}}>T$. Hence $t_{\mathrm{hit}}\le T$.

The radius estimate for all $1\le t\le t_{\mathrm{hit}}$ follows from the same induction. This completes the proof.
\end{proof}

\begin{corollary}[Deterministic stopping under an a-priori corruption budget]
\label{cor:adversarial-deterministic-stopping}
Under the hypotheses of \cref{thm:adversarial-positive}, suppose an a-priori upper bound $\tau\le\tau_0$ is known. Define
\begin{align*}
	\gamma(\epsilon,\tau_0)
	\triangleq
	\epsilon
	+
	c_1\sqrt{\epsilon\tau_0}
	+
	c_2\tau_0\sqrt{\log\parens*{\frac{2e}{\tau_0}}}
\end{align*}
where $c_1, c_2$ are the absolute constants of \cref{thm:adversarial-positive}. If $\gamma(\epsilon,\tau_0)\le\frac{1}{64}$ then \cref{alg:biht}, without per-iterate normalization, when run for $T_0 = 1+\ceil*{\log_2\parens*{\frac{1}{64\gamma(\epsilon,\tau_0)}}}$ iterations, returns an estimate $\widehat{x}_{\mathrm{BIHT}}$ satisfying 
\begin{align*}
	\norm{\widehat{x}_{\mathrm{BIHT}}-\xstar}_2
	\le
	\gamma(\epsilon,\tau_0)
\end{align*}
uniformly over all $\xstar\in\Sigma_s^n\cap S^{n-1}$, all corrupted measurements satisfying $d_{\mathrm H}(y,\sign(A\xstar))\le\tau m$ with $\tau\le\tau_0$, all sparse-unit initializations, and all hard-thresholding choices.
\end{corollary}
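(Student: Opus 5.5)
The plan is to rerun the coupled induction from the proof of \cref{thm:adversarial-positive} with the floor $\gamma_0\triangleq\gamma(\epsilon,\tau_0)$ in place of $\gamma(\epsilon,\tau)$, but in the ``max-envelope'' form of the noiseless proof of \cref{thm:noiseless}. The hitting-time statement alone is not enough: it only says that some iterate $t_{\mathrm{hit}}\le T_0$ is accurate, whereas the corollary needs the \emph{last} iterate $x^{(T_0)}$ to be accurate. So I will show that the error stays below the floor once it gets there, and that the radius does not drift out of its window before time $T_0$.

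First, reduce to the known budget. Condition on the adversarial \RAIC{} event of \cref{fact:gaussian-matrices-satisfy-adv-raic} at scale $\delta=\epsilon/(c_2L)$, with the same constants $c_2=c_1^2$ and $L$ as in \cref{eq:adv-base-constant-requirement,eq:adv-contraction-constant-requirement}. Since $\tau\le\tau_0$, I have $\mathcal F_A(\tau)\subseteq\mathcal F_A(\tau_0)$. The error bound is also monotone in $\tau$: $\sqrt{\delta\tau}$ is clearly increasing, and $\tau\sqrt{\log(2e/\tau)}$ is increasing on $(0,1]$ because its derivative $\sqrt{\log(2e/\tau)}-\tfrac{1}{2\sqrt{\log(2e/\tau)}}$ is positive, using $\log(2e/\tau)\ge\log(2e)>1/2$. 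Hence every \RAIC{} error bound may be stated with $\tau_0$ replacing $\tau$. The base case then goes through verbatim, since it only used $\gamma\le 1/64$: it gives $x^{(1)}\neq0$, $|r^{(1)}-1|\le 1/128$, and $e_1\le 1/64$.

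Next comes the induction claim. For every $1\le t\le T_0$,
\begin{align*}
x^{(t)}\neq0,\qquad e_t\le\max\set{2^{-(t-1)}/64,\gamma_0},\qquad |r^{(t)}-1|\le 1/64.
\end{align*}
Write $B_t$ for the one-step bracket $|r^{(t)}-1|e_t+b_1\sqrt{\delta e_t}+b_2\delta+b_3\sqrt{\delta\tau_0}+b_4\tau_0\sqrt{\log(2e/\tau_0)}$. Replacing $e_t$ by $M_t\triangleq\max\set{e_t,\gamma_0}$ and arguing exactly as in the derivation of \cref{eq:dt-le-bt}, each summand is at most its coefficient times $M_t$; here $\epsilon\le M_t$, $\sqrt{\epsilon\tau_0/c_2}\le M_t/c_2$, and $\tau_0\sqrt{\log(2e/\tau_0)}\le M_t/c_2$. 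Therefore $B_t\le\frac{63}{1024}M_t$. Since $M_t\le1/64$ and $r^{(t)}\ge 63/64$, \cref{lem:normalization-around-radial-target} applies around $r^{(t)}\xstar$ and gives $e_{t+1}\le\frac12 M_t$. This propagates the envelope, and the bound $|r^{(t+1)}-r^{(t)}|\le 2B_t\le\frac{63}{512}M_t$ controls the radius increments.

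The main obstacle is closing the radial part. Once the error sits at the floor, the increments no longer decay geometrically: each is of size about $\gamma_0$, so they could accumulate if the horizon were long. This is exactly why the horizon must be $T_0$ and not arbitrary. I will handle it as in the noiseless proof: $\sum_{j=1}^{T_0-1}\max\set{2^{-(j-1)}/64,\gamma_0}\le\frac1{32}+(T_0-1)\gamma_0\le\frac1{16}$, using $(T_0-1)\gamma_0\le 1/32$, which follows from $1+\log_2 q\le 2q$ with $q=1/(64\gamma_0)$. This yields $|r^{(t+1)}-1|\le\frac1{128}+\frac{63}{512}\cdot\frac1{16}<\frac1{64}$, closing the induction. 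Finally, the definition of $T_0$ gives $2^{-(T_0-1)}/64\le\gamma_0$, so $e_{T_0}\le\gamma_0$. Since $x^{(T_0)}\neq0$ and the algorithm returns $x^{(T_0)}/\norm{x^{(T_0)}}_2$, the claim holds uniformly over all the stated quantifiers.
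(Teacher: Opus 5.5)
Your proposal is correct and follows the paper's proof essentially line for line: invoke adversarial \RAIC{} at level $\tau_0$ via $\mathcal F_A(\tau)\subseteq\mathcal F_A(\tau_0)$, and rerun the induction with the envelope $\max\{2^{-(t-1)}/64,\gamma(\epsilon,\tau_0)\}$. You then bound the bracket by $\frac{63}{1024}\max\{e_t,\gamma(\epsilon,\tau_0)\}$ and close the radial drift using $(T_0-1)\gamma(\epsilon,\tau_0)\le 1/32$; your monotonicity check on $\tau\sqrt{\log(2e/\tau)}$ is correct but unnecessary once the \RAIC{} event is taken at level $\tau_0$.
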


\begin{proof}
Let $b_1,b_2,b_3,b_4$, $L$, and $\delta=\epsilon/(c_2L)$ be as in the proof of \cref{thm:adversarial-positive}. We repeat the induction with $\tau_0$ in place of $\tau$ and with the fixed floor $\gamma(\epsilon,\tau_0)$. Since every corruption pattern with the realised corruption fraction $\tau\le\tau_0$ is also admissible at level $\tau_0$, the same adversarial \RAIC{} event applies.

Whenever $\xt\neq0$, define $\rt\triangleq\norm{\xt}_2$, $\ut\triangleq\xt/\norm{\xt}_2$, and $e_t\triangleq\norm{\ut-\xstar}_2$.

The base-case argument in \cref{thm:adversarial-positive}, applied with
$\tau_0$ in place of $\tau$, gives
\begin{align*}
	x^{(1)}\neq0,
	\qquad
	e_1\le\frac{1}{64},
	\qquad
	\abs*{r^{(1)}-1}\le\frac{1}{128}
\end{align*}
We claim that for every $1\le t\le T_0$,
\begin{align*}
	\xt\neq0,
	\qquad
	e_t\le
	\max\set*{
		\frac{2^{-(t-1)}}{64},
		\gamma(\epsilon,\tau_0)
	},
	\qquad
	\abs*{\rt-1}\le\frac{1}{64}
\end{align*}
The base case is proved above. For the inductive step, the same support-restriction and moving-reference argument as in \cref{thm:adversarial-positive} gives
\begin{align*}
	\norm{\xtp-\rt\xstar}_2
	\le
	2 \parens *{\abs*{\rt-1}e_t+b_1\sqrt{\delta e_t}+b_2\delta+b_3\sqrt{\delta\tau_0}+b_4\tau_0\sqrt{\log\parens*{\frac{2e}{\tau_0}}}}
\end{align*}
%\qquad
%	\abs*{r_{t+1}-r_t}\le 2D_t
%\end{align*}

Since $\max\set{e_t, \gamma\parens*{\epsilon, \tau_0}} \ge e_t$ and $\max \set{e_t, \gamma\parens*{\epsilon, \tau_0}} \ge \gamma\parens*{\epsilon, \tau_0}$, for the same choices of $c_1, c_2, L, \delta$ as in \cref{eq:adv-contraction-constant-requirement}, we have
\begin{align*}
\abs*{r_t-1}e_t
+
b_1\sqrt{\delta e_t}
+
b_2\delta
+
b_3\sqrt{\delta\tau_0}
+
b_4\tau_0\sqrt{\log\parens*{\frac{2e}{\tau_0}}}
&\le 
\parens *{\frac{1}{64}
+
\frac{b_1}{\sqrt{c_2L}}
+
\frac{b_2}{c_2L}
+
\frac{b_3}{c_2\sqrt L}
+
\frac{b_4}{c_2}} \max \set{e_t , \gamma\parens*{\epsilon, \tau_0}}\\
&\le \frac{1-\nicefrac{1}{64}}{16}\max \set{e_t , \gamma\parens*{\epsilon, \tau_0}}
\end{align*}
By assumption $\gamma(\epsilon, \tau_0) \le \frac{1}{64}$. By the induction hypothesis, $e_t \le \max \set {\frac{2^{-(t-1)}}{64}, \gamma\parens*{\epsilon, \tau_0}}$. Then, $\max\set{e_t, \gamma\parens*{\epsilon, \tau_0}} \le 1/64$. Since $\rt \ge 63/64$, the normalization estimate around $\rt\xstar$ applies exactly as before and yields
\begin{align*}
	e_{t+1}
	\le
	\frac{1}{2} \max\set{e_t, \gamma\parens*{\epsilon, \tau_0}}
	\le
	\max\set*{
		\frac{2^{-t}}{64},
		\gamma(\epsilon,\tau_0)
	}
\end{align*}
For the radius, summing the one-step radial estimate gives
\begin{align}
\abs*{\rtp-1}
&\le
\frac{1}{128}
+
\frac{1-\nicefrac{1}{64}}{8}
\sum_{j=1}^{T_0-1}
\max\set*{
	\frac{2^{-(j-1)}}{64},
	\gamma(\epsilon,\tau_0)
}\label{eq:radial-sum}
\end{align}
The contribution from the first term is at most $1/32$. For the second term, let $q=1/(64\gamma(\epsilon,\tau_0))\ge1$. Then, 
\begin{align*}
	(T_0-1)\gamma(\epsilon,\tau_0)
	=
	\ceil*{\log_2 q}\gamma(\epsilon,\tau_0)
	\le
	2q\gamma(\epsilon,\tau_0)
	=
	\frac{1}{32}
\end{align*}
where the last inequality uses $\log_2 q \le q$ together with $q \ge 1$. Then, the sum in the second term of \cref{eq:radial-sum} is at most $1/16$, and 
\begin{align*}
	\abs*{\rtp-1}
	\le
	\frac{1}{128}
	+
	\frac{1-\nicefrac{1}{64}}{128}
	<
	\frac{1}{64}
\end{align*}
This closes the induction. At time $T_0$,
\begin{align*}
	e_{T_0}
	\le
	\max\set*{
		\frac{2^{-(T_0-1)}}{64},
		\gamma(\epsilon,\tau_0)
	}
	\le
	\gamma(\epsilon,\tau_0)
\end{align*}
by the definition of $T_0$, and that concludes the proof. 
\end{proof}
\begin{corollary}[Stable window after deterministic stopping] \label{cor:adversarial-stable-window}
Under the hypotheses and notation of \cref{cor:adversarial-deterministic-stopping}, suppose the same fixed-step iterates are continued past $T_0$. Then, on the same high-probability event as in \cref{cor:adversarial-deterministic-stopping}, for every
\begin{align*}
	T_0\le t\le T_0+\floor*{\frac{1}{5\gamma(\epsilon,\tau_0)}}
\end{align*}
we have $x^{(t)}\neq0$,
\begin{align*}
	\abs*{\norm{x^{(t)}}_2-1}\le\frac{1}{16}
\end{align*}
and
\begin{align*}
	\norm*{
		\frac{x^{(t)}}{\norm{x^{(t)}}_2}
		-
		\xstar
	}_2
	\le
	\gamma(\epsilon,\tau_0)
\end{align*}
with the same uniformity over targets, admissible corruptions, sparse-unit initializations, and hard-thresholding choices.
\end{corollary}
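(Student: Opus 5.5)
We continue the induction of \cref{cor:adversarial-deterministic-stopping} past $T_0$, now with a fixed-floor invariant instead of a geometrically shrinking one. Write $\gamma_0\triangleq\gamma(\epsilon,\tau_0)$ and $K\triangleq\floor{1/(5\gamma_0)}$. We condition on the same adversarial \RAIC{} event. The target claim is that for every $T_0\le t\le T_0+K$ we have
\begin{align*}
x^{(t)}\neq0,\qquad e_t\le\gamma_0,\qquad \abs*{\rt-1}\le\frac{1}{16}.
\end{align*}
The base case $t=T_0$ comes directly from \cref{cor:adversarial-deterministic-stopping} and its induction, which give $e_{T_0}\le\gamma_0$ and $\abs{r^{(T_0)}-1}<1/64$.

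For the inductive step, the support restriction (\cref{lem:support-restriction}), direction invariance (\cref{lem:sign-norm-invariant}), adversarial \RAIC{}, and \cref{lem:2-approx-projector} produce the same moving-reference bound:
\begin{align*}
\norm{\xtp-\rt\xstar}_2\le 2\parens*{\abs{\rt-1}e_t+b_1\sqrt{\delta e_t}+b_2\delta+b_3\sqrt{\delta\tau_0}+b_4\tau_0\sqrt{\log(2e/\tau_0)}}.
\end{align*}
The only change from the earlier argument is that the radius window is now $1/16$ rather than $1/64$. We bound $\abs{\rt-1}e_t\le \gamma_0/16$. Using $\max\set{e_t,\gamma_0}=\gamma_0$, the remaining \RAIC{} terms are controlled exactly as before by $(b_1/\sqrt{c_2L}+b_2/(c_2L)+b_3/(c_2\sqrt L)+b_4/c_2)\gamma_0$. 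By \cref{eq:adv-contraction-constant-requirement}, that coefficient is at most $63/1024$. The parenthesis is therefore at most $(1/16+63/1024)\gamma_0<\gamma_0/8$, so
\begin{align*}
\norm{\xtp-\rt\xstar}_2<\frac{\gamma_0}{4}.
\end{align*}
Since $\rt\ge 15/16$ and $\gamma_0\le 1/64$, this is far below $\rt/2$. So \cref{lem:normalization-around-radial-target} applies and gives $x^{(t+1)}\neq 0$ together with
\begin{align*}
e_{t+1}\le \frac{4}{\rt}\cdot\frac{\gamma_0}{4}\cdot\frac{16}{15}\cdot(\text{slack})\le\gamma_0.
\end{align*}
For this to close we need the one-step constant times $8/\rt$ to be at most $1$. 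This is the main obstacle, because the weaker radius window inflates the contraction factor. Concretely, we need $(8/(15/16))\cdot(1/16+63/1024)\le 1$, i.e. $\tfrac{128}{15}\cdot\tfrac{127}{1024}\approx1.06$. This fails marginally.

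To fix this, we would first check whether the theorem's constants $c_2,L$ leave enough room. Nothing in the inequality depends on them beyond \cref{eq:adv-contraction-constant-requirement}, so we would instead tighten the bookkeeping. We use the sharper bound from \cref{lem:normalization-around-radial-target} in its actual form, $e_{t+1}\le (4/\rt)\norm{\xtp-\rt\xstar}_2$, with $\norm{\xtp-\rt\xstar}_2\le 2D_t$, where $D_t$ denotes the parenthesized one-step quantity above. This is the same $8/\rt$ factor, so a genuine saving is needed. The natural source is that the radius actually drifts slowly. We keep the stronger invariant $\abs{\rt-1}\le 1/16$ only for the final statement, and inside the contraction estimate we use $\abs{\rt-1}e_t$ with the real radius, which stays near $1/64$ for most of the window. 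Failing that, we would enlarge $c_2$ or $L$ in the theorem, permitted because they are unspecified absolute constants, so that the \RAIC{} coefficient is small, say $\le 1/64$. Then $(128/15)(1/16+1/64)<1$ and the step closes with margin.

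For the radius, each step gives $\abs{\rtp-\rt}\le\norm{\xtp-\rt\xstar}_2\le 2D_t\le \gamma_0/4$. Summing over at most $K\le 1/(5\gamma_0)$ steps past $T_0$ gives
\begin{align*}
\abs{r^{(t)}-1}\le \frac{1}{64}+K\cdot\frac{\gamma_0}{4}\le\frac{1}{64}+\frac{1}{20}<\frac{1}{16},
\end{align*}
which closes the induction on the window. Uniformity over targets, admissible corruptions, initializations, and thresholding choices is inherited because every step is deterministic on the \RAIC{} event.
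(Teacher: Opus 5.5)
\textbf{There is a genuine gap: as written, neither inductive step closes.} Your structure is the same as the paper's: the fixed-floor invariant $e_t\le\gamma_0$, $\abs{r^{(t)}-1}\le 1/16$, the moving-reference bound, and the summed radial drift. The ``main obstacle'' you identify, however, comes from a misreading of the constants.

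The inequality \cref{eq:adv-contraction-constant-requirement} bounds $\frac{1}{64}$ \emph{plus} the \RAIC{} coefficient by $\frac{63}{1024}$; the $1/64$ radial term is already on its left-hand side. So the \RAIC{} coefficient alone is at most $\frac{63}{1024}-\frac{16}{1024}=\frac{47}{1024}$, not $\frac{63}{1024}$. With $\abs{r^{(t)}-1}\le\frac{1}{16}$ and $e_t\le\gamma_0$, the one-step quantity is then at most $\parens{\frac{64}{1024}+\frac{47}{1024}}\gamma_0=\frac{111}{1024}\gamma_0$. Now \cref{lem:normalization-around-radial-target} gives $e_{t+1}\le\frac{128}{15}\cdot\frac{111}{1024}\gamma_0=\frac{111}{120}\gamma_0<\gamma_0$, with no need for any fallback.

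Your fallbacks would not close the step anyway. The claim that the real radius ``stays near $1/64$ for most of the window'' is never proved; your own drift bound lets $\abs{r^{(t)}-1}$ grow by up to $1/20$ over the window. Re-choosing $c_2$ or $L$ changes the constants defining $\gamma(\epsilon,\tau_0)$, so it proves a variant rather than the stated corollary, and it is unnecessary.

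The radial step has a separate arithmetic error: $\frac{1}{64}+\frac{1}{20}=\frac{21}{320}>\frac{20}{320}=\frac{1}{16}$. With your per-step drift bound $\gamma_0/4$, the radius is \emph{not} shown to stay within $1/16$ over $\floor{1/(5\gamma_0)}$ steps. That means the hypothesis $\abs{r^{(t)}-1}\le 1/16$, which your contraction step relies on, is not maintained. With the correct per-step bound $2\cdot\frac{111}{1024}\gamma_0=\frac{111}{512}\gamma_0$, the total drift over the window is at most $\frac{111}{512}\cdot\frac{1}{5}=\frac{111}{2560}$. This gives $\abs{r^{(t)}-1}\le\frac{1}{64}+\frac{111}{2560}=\frac{151}{2560}<\frac{1}{16}$. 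With this single correction of the constant, your argument becomes exactly the paper's proof.
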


\begin{proof}
The proof of \cref{cor:adversarial-deterministic-stopping} gives
\begin{align*}
	x^{(T_0)}\neq0,
	\qquad
	e_{T_0}\le\gamma(\epsilon,\tau_0),
	\qquad
	\abs*{r^{(T_0)}-1}\le\frac{1}{64}
\end{align*}
where, whenever $\xt\neq0$, we write $\rt=\norm{\xt}_2$, $\ut=\xt/\norm{\xt}_2$, and $e_t=\norm{\ut-\xstar}_2$.

We prove by induction that for every
\begin{align*}
	T_0\le t\le T_0+\floor*{\frac{1}{5\gamma(\epsilon,\tau_0)}}
\end{align*}
we have
\begin{align*}
	\xt\neq0,
	\qquad
	e_t\le\gamma(\epsilon,\tau_0),
	\qquad
	\abs*{\rt-1}\le\frac{1}{16}
\end{align*}
The base case follows from \cref{cor:adversarial-deterministic-stopping}. Assume the claim holds at some time $t$ such that
\begin{align*}
	T_0\le t<T_0+\floor*{\frac{1}{5\gamma(\epsilon,\tau_0)}}
\end{align*}
The same support-restriction and moving-reference argument as in \cref{thm:adversarial-positive} gives
\begin{align*}
\norm{\xtp-\rt\xstar}_2
\le
2\parens*{
\abs*{\rt-1}e_t
+
b_1\sqrt{\delta e_t}
+
b_2\delta
+
b_3\sqrt{\delta\tau_0}
+
b_4\tau_0\sqrt{\log\parens*{\frac{2e}{\tau_0}}}
}
\end{align*}
Using $e_t\le\gamma(\epsilon,\tau_0)$, $\abs*{\rt-1}\le1/16$, and the same termwise bounds as in the proof of \cref{cor:adversarial-deterministic-stopping}, we obtain
\begin{align*}
&\abs*{\rt-1}e_t
+
b_1\sqrt{\delta e_t}
+
b_2\delta
+
b_3\sqrt{\delta\tau_0}
+
b_4\tau_0\sqrt{\log\parens*{\frac{2e}{\tau_0}}} 
&\le
\parens*{
\frac{1}{16}
+
\frac{b_1}{\sqrt{c_2L}}
+
\frac{b_2}{c_2L}
+
\frac{b_3}{c_2\sqrt L}
+
\frac{b_4}{c_2}
}
\gamma(\epsilon,\tau_0)
\end{align*}
By \cref{eq:adv-contraction-constant-requirement},
\begin{align*}
\frac{b_1}{\sqrt{c_2L}}
+
\frac{b_2}{c_2L}
+
\frac{b_3}{c_2\sqrt L}
+
\frac{b_4}{c_2}
\le
\frac{1-\nicefrac{1}{64}}{16}
-
\frac{1}{64}
=
\frac{47}{1024}
\end{align*}
Therefore
\begin{align*}
\abs*{\rt-1}e_t
+
b_1\sqrt{\delta e_t}
+
b_2\delta
+
b_3\sqrt{\delta\tau_0}
+
b_4\tau_0\sqrt{\log\parens*{\frac{2e}{\tau_0}}}
\le
\frac{111}{1024}\gamma(\epsilon,\tau_0)
\end{align*}
Since $\gamma(\epsilon,\tau_0)\le1/64$ and $\rt\ge15/16$, the normalization estimate around $\rt\xstar$ applies exactly as before. Then, 
\begin{align*}
e_{t+1}
&\le
\frac{8}{\rt}\cdot
\frac{111}{1024}\gamma(\epsilon,\tau_0)\\
&\le
\frac{128}{15}\cdot
\frac{111}{1024}\gamma(\epsilon,\tau_0)\\
&=
\frac{111}{120}\gamma(\epsilon,\tau_0)
<
\gamma(\epsilon,\tau_0)
\end{align*}
This proves the directional part of the induction.

For the radius, the same one-step estimate gives
\begin{align*}
\abs*{r^{(t+1)}-\rt}
&\le
\norm{\xtp-\rt\xstar}_2\\
&\le
\frac{111}{512}\gamma(\epsilon,\tau_0)
\end{align*}
Thus, for every $t$ in the claimed window,
\begin{align*}
\abs*{\rt-1}
&\le
\abs*{r^{(T_0)}-1}
+
\sum_{j=T_0}^{t-1}
\abs*{r^{(j+1)}-r^{(j)}}\\
&\le
\frac{1}{64}
+
\frac{111}{512}\gamma(\epsilon,\tau_0)
\floor*{\frac{1}{5\gamma(\epsilon,\tau_0)}}\\
&\le
\frac{1}{64}
+
\frac{111}{2560}=
\frac{151}{2560}
<
\frac{1}{16}
\end{align*}
This closes the induction and completes the proof.
\end{proof}

\section{Discussion}

%The lower bound under corruptions is stated for a fixed step-size. It is natural to ask whether variable step-sizes can smooth out the scalar oscillation. In \cref{app:loss-landscape}, we revisit this question from the loss-landscape perspective and find that variable step-sizes do not give a genuinely new last-iterate mechanism. Objective-driven choices fail on the scalar cone, with Polyak's step-size pushing every non-zero scalar iterate towards the origin. For time only schedules, if the total future motion is infinite, then the scalar oscillation persists. If, instead, the total motion is finite, the movement budget must be placed strategically, which requires a-priori corruption information and essecntially reduces to a softened version of the stopping rule in \cref{cor:adversarial-deterministic-stopping}. Norm-dependent schedules can avoid this obstruction by controlling the relative step-size seen by the direction, but that just reproduces normalized \BIHT{} at the level of directions. 

Our analysis, like much of the theory for one-bit compressed sensing, relies crucially on the Gaussian measurement model. The main probabilistic input is the restricted approximate invertibility condition, which uses the angular geometry of Gaussian signs. Extending comparable guarantees for \BIHT{} and \NBIHT{} to sub-Gaussian ensembles or other structured sensing matrices remains an interesting direction for future work.

\section{Acknowledgements}

We thank Namiko Matsumoto for her helpful comments on an earlier version. This work is supported by NSF TRIPODS award 2217058 (EnCORE Inst.) and 2112665.

\PrintBibliography

\newpage
\appendix

\appendix

\crefname{appendix}{appendix}{appendices}
\Crefname{appendix}{Appendix}{Appendices}

\crefalias{section}{appendix}
\crefalias{subsection}{appendix}
\crefalias{subsubsection}{appendix}

\setcounter{section}{0}
\renewcommand{\thesection}{A.\arabic{section}}
\renewcommand{\theHsection}{appendix.\arabic{section}}

\section{Numerical Ledger}
We specify our choice of constants for our main results. For \cref{thm:noiseless}, $c_1 = \sqrt{\frac{\pi}{(380\times 512)}}\parens *{\sqrt{3}+16} \approx 0.07125$ and $c_2 = \frac{90\sqrt{2}}{380} \approx 0.3349$ are the absolute constants in \cref{fact:gaussian-matrices-satisfy-raic}, as they appear in \textcite{Matsumoto_noiseless_2024}, and $L = 13$. For \cref{thm:adversarial-positive}, $b_1 = 1.347, b_2 = 0.3807, b_3 = 1.2501$ and $b_4 = 9.0899$ are the universal constants in \cref{fact:gaussian-matrices-satisfy-adv-raic}. These constants dominate the corresponding constants in Theorem 3.1 of \textcite{Matsumoto_robust_2023} and by monotonicity of \RAIC{}, the same conclusion holds. It suffices to choose $c_1 = 16, c_2 = c_1 ^2 = 256$ and $L=128$ to ensure the conditions of \cref{eq:adv-base-constant-requirement} and \cref{eq:adv-contraction-constant-requirement}. 
%\section{Numerical experiments} \label{app:numerics}
\begin{figure}[t]
    \centering

    \begin{subfigure}{\linewidth}
        \centering
        \includegraphics[width=\linewidth]{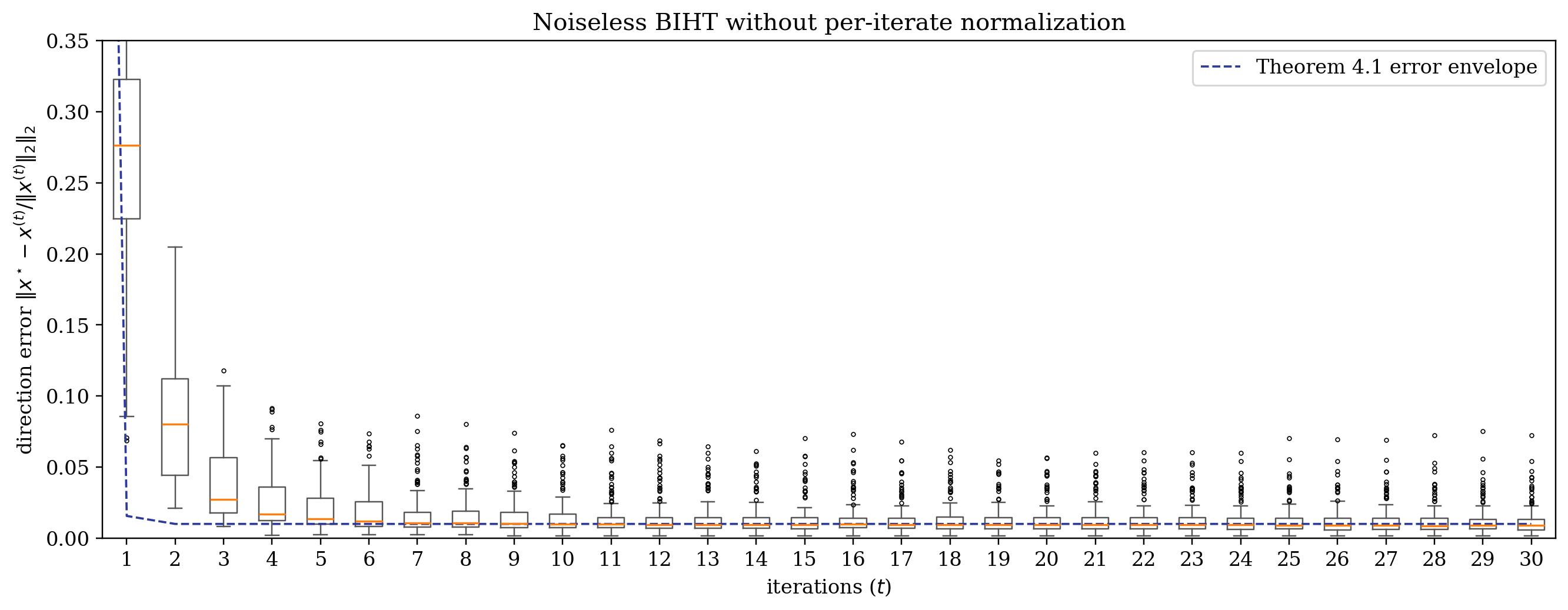}
        \caption{
        Noiseless measurements. Boxplots show the directional error of \BIHT{}\cite{Jacques2011Robust1C} for $t=1,\ldots,30$ over $100$ independent trials. The dashed curve is the (constant-suppressed) error envelope from \cref{thm:noiseless}, with $\epsilon = 0.01$. 
        }
        \label{fig:numerics-noiseless}
    \end{subfigure}

    \vspace{1.25em}

    \begin{subfigure}{\linewidth}
        \centering
        \includegraphics[width=\linewidth]{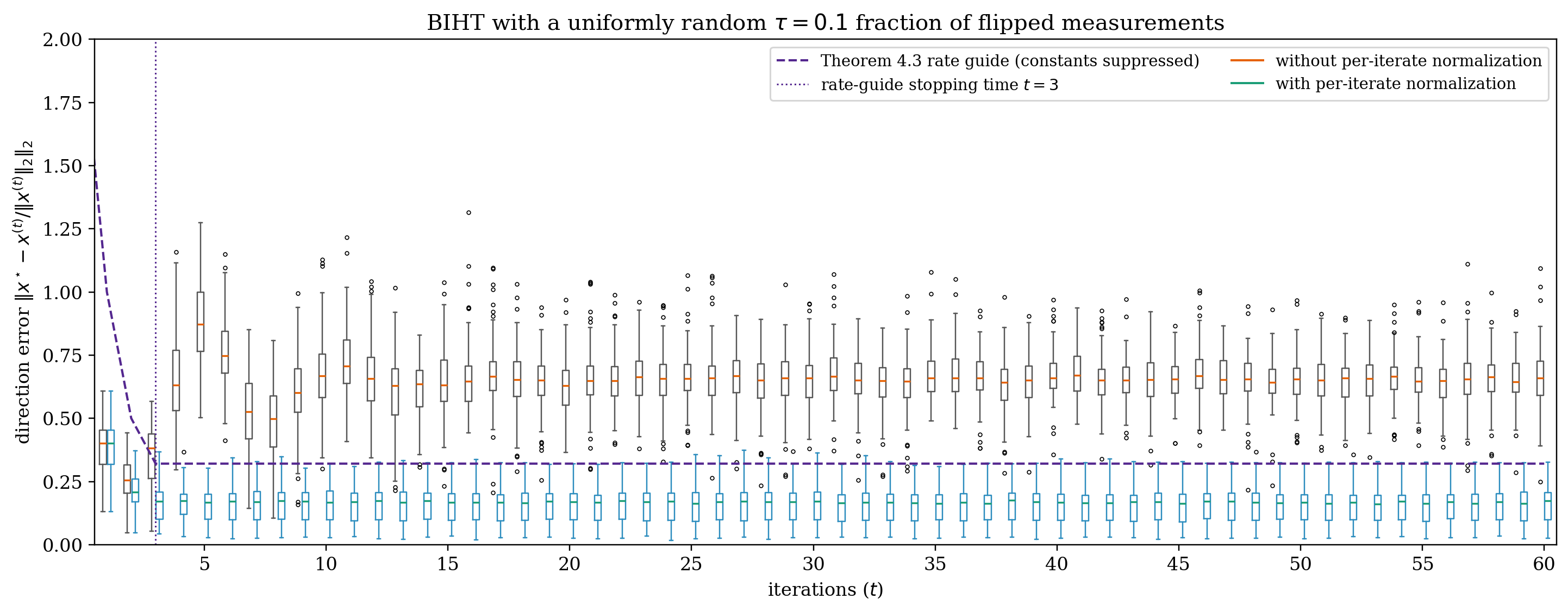}
        \caption{
        A uniformly random $\tau=0.1$ fraction of measurements is flipped, and we compare unnormalized \BIHT{} with the per-iterate normalized variant. Boxplots show the directional error over $100$ independent trials with $n=2000$, $s=5$, $m=1000$, and $60$ iterations. The dashed curve is the constants-suppressed rate guide from \cref{thm:adversarial-positive}, and the dotted vertical line marks the corresponding stopping time. 
        }
        \label{fig:numerics-corruptions}
    \end{subfigure}

    \caption{
    Numerical experiments illustrating the two regimes studied in the paper: finite-time convergence without normalization in the noiseless case, and early recovery under sign corruptions.
    }
    \label{fig:numerics}
\end{figure}

\end{document}